\documentclass[acmsmall]{acmart}
\AtBeginDocument{%
  }

\setcopyright{acmlicensed}
\acmJournal{PACMMOD}
\acmYear{2023} \acmVolume{1} \acmNumber{4 (SIGMOD)} \acmArticle{268} \acmMonth{12} \acmPrice{15.00}\acmDOI{10.1145/3626762}


\usepackage[ruled,linesnumbered]{algorithm2e}
\usepackage{multirow}
\usepackage{array}
\usepackage{graphicx}
\usepackage{subcaption}



\begin{document}
\title{R2D2: Reducing Redundancy and Duplication in Data Lakes}\titlenote{All the work done in this paper was while all the authors were affiliated with Adobe Research.}

\author{Raunak Shah}
\authornote{Both authors contributed equally to this research.}
\email{raunaks3@illinois.edu}
\orcid{0000-0002-2889-7855}
\affiliation{%
  \institution{University of Illinois, Urbana-Champaign}
  \city{Champaign}
  \country{USA}
}

\author{Koyel Mukherjee}
\authornotemark[2]
\email{komukher@adobe.com}
\orcid{0000-0002-8690-323X}
\affiliation{%
  \institution{Adobe Research}
  \city{Bangalore}
  \country{India}
}

\author{Atharv Tyagi}
\orcid{0009-0008-3904-7953}
\email{athtyagi@adobe.com}
\affiliation{%
  \institution{Adobe Research}
  \city{Bangalore}
  \country{India}
}

\author{Sai Keerthana Karnam}
\orcid{0000-0003-1328-5167}
\email{skarnam@adobe.com}
\affiliation{%
  \institution{Adobe}
  \city{Bangalore}
  \country{India}  
}

\author{Dhruv Joshi}
\orcid{0009-0007-4543-846X}
\email{dhruvjoshi43@gmail.com}
\affiliation{%
 \institution{Indian Institute of Technology Kharagpur}
 \city{Kharagpur}
 \country{India} 
}

\author{Shivam Bhosale}
\orcid{0000-0002-5817-5271}
\email{sbhosale@adobe.com}
\affiliation{%
  \institution{Adobe}
  \city{Bangalore}
  \country{India}
}

\author{Subrata Mitra}
\orcid{0009-0009-8436-3119}
\email{subrata.mitra@adobe.com}
\affiliation{%
  \institution{Adobe Research}
  \city{Bangalore}
  \country{India}
}

\renewcommand{\shortauthors}{Raunak Shah et al.}

\begin{abstract}
Enterprise data lakes often suffer from substantial amounts of duplicate and redundant data, with data volumes ranging from terabytes to petabytes. This leads to both increased storage costs and unnecessarily high maintenance costs for these datasets. In this work, we focus on identifying and reducing redundancy in enterprise data lakes by addressing the problem of ``dataset containment". To the best of our knowledge, this is one of the first works that addresses table-level containment at a large scale.

We propose \textbf{R2D2}: a three-step hierarchical pipeline that efficiently identifies almost all instances of containment by progressively reducing the search space in the data lake. It first builds (i) a schema containment graph, followed by (ii) statistical min-max pruning, and finally, (iii) content level pruning. We further propose minimizing the total storage and access costs by optimally identifying redundant datasets that can be deleted (and reconstructed on demand) while respecting latency constraints.

We implement our system on Azure Databricks clusters using Apache Spark for enterprise data stored in ADLS Gen2, and on AWS clusters for open-source data. In contrast to existing modified baselines that are inaccurate or take several days to run, our pipeline can process an \textbf{enterprise customer data lake at the TB scale in approximately 5 hours} with high accuracy. We present theoretical results as well as extensive empirical validation on both enterprise (scale of TBs) and open-source datasets (scale of MBs - GBs), which showcase the effectiveness of our pipeline.
\end{abstract}

\begin{CCSXML}
<ccs2012>
<concept>
<concept_id>10002951</concept_id>
<concept_desc>Information systems</concept_desc>
<concept_significance>500</concept_significance>
</concept>
<concept>
<concept_id>10002951.10002952</concept_id>
<concept_desc>Information systems~Data management systems</concept_desc>
<concept_significance>500</concept_significance>
</concept>
<concept>
<concept_id>10002951.10002952.10003190</concept_id>
<concept_desc>Information systems~Database management system engines</concept_desc>
<concept_significance>500</concept_significance>
</concept>
</ccs2012>
\end{CCSXML}

\ccsdesc[500]{Information systems}
\ccsdesc[500]{Information systems~Data management systems}
\ccsdesc[500]{Information systems~Database management system engines}

\keywords{data management, data provenance, storage, redundancy}

\received{April 2023}
\received[revised]{July 2023}
\received[accepted]{September 2023}

\maketitle

\section{Introduction}
Enterprises nowadays ingest and process huge amounts of data daily.
Roughly 2.5 quintillion bytes of data are generated every day. The data storage unit market revenue worldwide in 2022 was 44.7 billion USD \cite{stat}. Managing such huge amounts of data is a tedious task, and it requires innovative data management techniques that are motivated towards saving costs. In fact, if not managed well, data ceases to be of value to enterprises or their customers or consumers and instead ends up incurring huge amounts of COGS (cost of goods sold) and liabilities for enterprises.

\textbf{\underline{Costs Associated with Data Regulations:}} 
GDPR\footnote{General Data Protection Regulation} and other privacy regulations (e.g., CCPA in California) are becoming increasingly important, with major tech firms incurring fines of hundreds of millions of dollars \cite{itgov-gdpr}. As a result, companies are spending huge amounts to ensure such compliance. According to estimates by IAPP\footnote{The International Association of Privacy Professionals} and others \cite{itif-privacy}, the spends range from \textbf{7.8 - 17 billion USD}, resulting in \textbf{~8\%} decline in profit~\cite{cepr}. 

The costs for data maintenance can be potentially reduced manifold by better management of data, namely, effective data retention and destruction policies \cite{pulse-gdpr}.

\textbf{\underline{The Redundant and Untracked Data problem:} }
To further complicate the problem, enterprise data lakes, with data volumes ranging in petabytes, often suffer from rampant data duplication. For instance, a marketer and an analyst from the same organization can process the same dataset, often in the same manner, for analytics and insight generation, saving the results for future use. This ends up creating even more data, without any record of the mutual relationship of the newly created datasets either with one another or with the existing datasets. As a result, enterprise data lakes often end up having untracked instances of related, or even, duplicate datasets. As an example, we analyzed the data containment for 3 customers in our enterprise data lake. We found that out of ~1400 datasets, over 115 datasets are fully contained within others, and 231 datasets are more than 75\% contained, accounting for 400 GBs of redundant data for these 3 customers alone.

\textbf{\underline{Cost Implications of Redundant Data:} }
Data storage as well as data access costs on the cloud\footnote{For example, the data storage and access costs on ADLS Gen 2 can be found here: \url{https://azure.microsoft.com/en-in/pricing/details/storage/data-lake/}.} both contribute substantially to the overall data maintenance costs for enterprises. Our enterprise data lake typically has at least one GDPR or privacy request-initiated access per customer dataset per week. This results in a \textbf{full table scan}\footnote{For many organizations, including ours, the data lake can only be accessed via APIs that use Apache Spark \cite{spark}, which \emph{does not use an indexed database}. In some cases, the data may be partitioned via timestamp, but that does not indicate user-level information. Thus, currently, a privacy-initiated access is a \textbf{full table scan}.} 
which is very expensive, especially since it is done at regular intervals. For a data lake of size ~1.4PBs in our enterprise, we can potentially save ~0.15 billion row scans incurred due to privacy-initiated (GDPR) accesses per month by deleting contained datasets.

Identifying and deleting redundant or duplicate data therefore becomes important. We define data redundancy as a condition in which the contents of a dataset are exactly contained in another dataset. We study this `Dataset Containment' problem across our data lake and propose to build a dataset containment graph, encoding the mutual containment information. Following that, we propose to delete some of the redundant (contained) datasets \textit{optimally}. 
Building a containment graph for an existing data lake is an interesting but hard problem that presents challenges in scalability as well as in identifying the correct relationships between datasets. 

Recently, several works have studied data discovery and data relatedness subject to different notions of similarity between datasets \cite{bharadwaj, josie, nextiajd, d3l, dataCivilizer, lcjoin, silkmoth, pexeso, ronin}. Many of these consider joinability or unionability \cite{tableunion, santos, union2} as the target relatedness metric, and often use schema based features to determine similarity in this context. These are insufficient for our use case of detecting containment since we need to look at content based similarity, as discussed in Section \ref{sec:schema-simi}. For similar reasons, the large body of works on entity resolution and matching, e.g. \cite{machamp, deepentity} also become inapplicable. Existing works on content similarity have looked at set similarity, which is insufficient for our case, where we need to detect exact row (record) level containment across subsets of columns (schema). Moreover, most of the existing work (on content similarity) has shown results on a smaller scale, for example, MBs to GBs, whereas we need to operate at a scale of TBs and PBs in enterprise data lakes. Works on storage layer de-duplication at block-level e.g. \cite{primarydatadedup, blockdedup1, blockdedup2} are also orthogonal to our use case because they would be ineffective in detecting containment arising out of logical operations, and inefficient due to additional storage and processing overheads.

We propose a scalable, hierarchical framework \textbf{R2D2} (which stands for ``Reducing Redundancy and Duplication in Data Lakes'') for identifying data containment relations, that operates on \textbf{TBs of data within a few hours with high accuracy.} 
To the best of our knowledge, this is one of the first works on identifying table-level containment at a large scale. Unlike R2D2, brute force approaches to compute ground truth and (modified) baselines from literature can take days to compute and/or give inaccurate results, respectively. We present results on datasets from multiple customer orgs\footnote{Here, a "customer org" refers to datasets sourced from and belonging to a particular customer in the data lake.} spread across different domains with different types and distributions of data to show the generalizability of our approach.
Additionally, we study the cost optimization problem arising out of the trade-off of retention versus deletion (and reconstruction on demand), along with the challenges of implementing such an algorithm in a dynamic, enterprise environment. 

We also discuss the challenges of identifying "approximately contained"  datasets in Section \ref{sec:approx-containment}.

\subsection{Our Contributions}
\begin{enumerate}
    \item \textbf{R2D2 Framework: } We propose a scalable, hierarchical framework \textbf{R2D2} for identifying dataset containment relations, that operates on \textbf{TBs of data within a few hours with high accuracy}. It progressively reduces the search space by first building (i) a schema containment graph, followed by (ii) statistical min-max pruning, and finally, (iii) content level pruning. (Section \ref{sec:r2d2}).

    \item \textbf{Schema Graph Builder: } We propose an efficient schema clustering algorithm for building the schema containment relationship graph between datasets. We prove theoretically that no (ground truth) edges are missed in the resultant graph. (Section \ref{sec:sgb}).
   
    \item \textbf{Min-Max and Content Level Pruning: } We propose efficient algorithms based on statistical (minimum and maximum values in numerical columns) and content level similarity for eliminating edges from the schema graph to progressively build the dataset containment graph. We theoretically bound the sampling complexity with respect to the extent of containment with a probabilistic guarantee for correctly eliminating edges. (Section \ref{sec:mmp} and \ref{sec:clp}.)

    \item\textbf{Cost Optimization:} We provide an optimization algorithm that minimizes the overall expected storage and access costs while respecting latency constraints. It takes a containment graph as input and identifies redundant datasets that can be deleted (and reconstructed on demand). (Section \ref{sec:optimization}).

    \item \textbf{Empirical Validation: } We validate our pipeline through substantial empirical results on both enterprise and open-source data, ranging from MBs to TBs. We compare with baselines and show that our method achieves very good results compared to the ground truth, both in terms of correctness and efficiency. We also test the scalability of the pipeline and potential storage and compute savings. (Section \ref{sec:experiments}).
\end{enumerate}

Figure \ref{fig:framework} shows an end-to-end high level view of our proposed pipeline and framework.
\begin{figure}[htbp]
    \centering
    \includegraphics[width=0.7\linewidth]{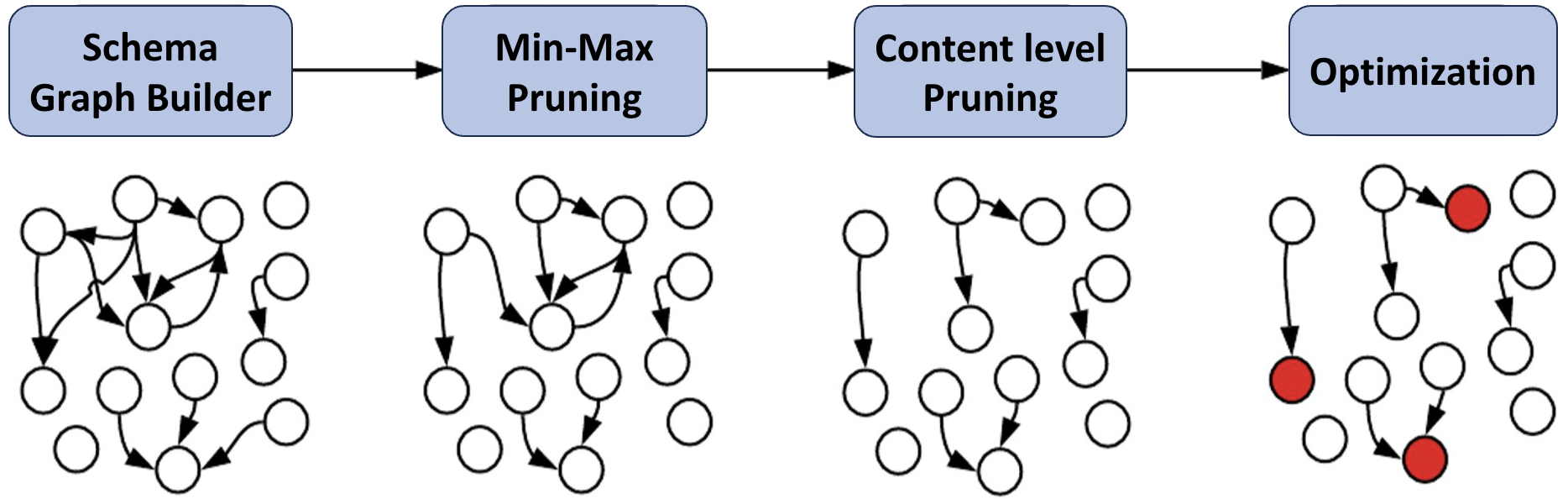}
    \caption{An illustration of our end-to-end framework. After computing the schema graph, each step successively prunes the graph space, first based on the minimum and maximum values of columns and then based on the contents. At the end, nodes marked in red are recommended for deletion.}
    \label{fig:framework}
\end{figure}

\subsection{Enterprise Schema and Data Containment}
\label{sec:schema-simi}
In our enterprise data lake, different customer orgs have different distributions of schemas as well as data. In some orgs, many pairs of datasets have similar schema, whereas in others, this number is lower (refer Fig. \ref{fig:schema_cont}). Not only does the distribution of schema similarity vary widely across customers, but the schema similarity alone is insufficient to conclude data similarity or extent of containment. Existing approaches such as \cite{bharadwaj} that focus on identifying similar datasets with respect to joinable columns, use schema based features and data features computed on small samples of data with high accuracy. However, in our enterprise data lake, tables with similar, generic schemas often have very different distributions of values within a column, since the tables often come from different sources and have gone through varied types of processing and transformations before reaching their current state. In another experiment, we considered tables with the same schema from a customer's data, and computed quantiles (at fractions 0, 0.5, 0.8, 0.95, and 1) from the distribution of values in each column. We found that despite having the same schema, over 20\% of table pairs have normalized quantiles that are at least 50\% different, which indicates a large degree of independence between the schema (column) name and exact values within a column. Thus, it is important to consider data based features as well as schema based features while identifying containment. The scale of our problem is massive, and building the ground truth dataset containment graph through pairwise dataset comparisons would require the order of $10^{21}$ pairwise record level comparisons (see Table \ref{tab:row-ops}) for certain customer accounts.

\begin{figure}[htbp]
  \centering
  \begin{subfigure}{0.48\textwidth}
    \includegraphics[width=0.95\linewidth]{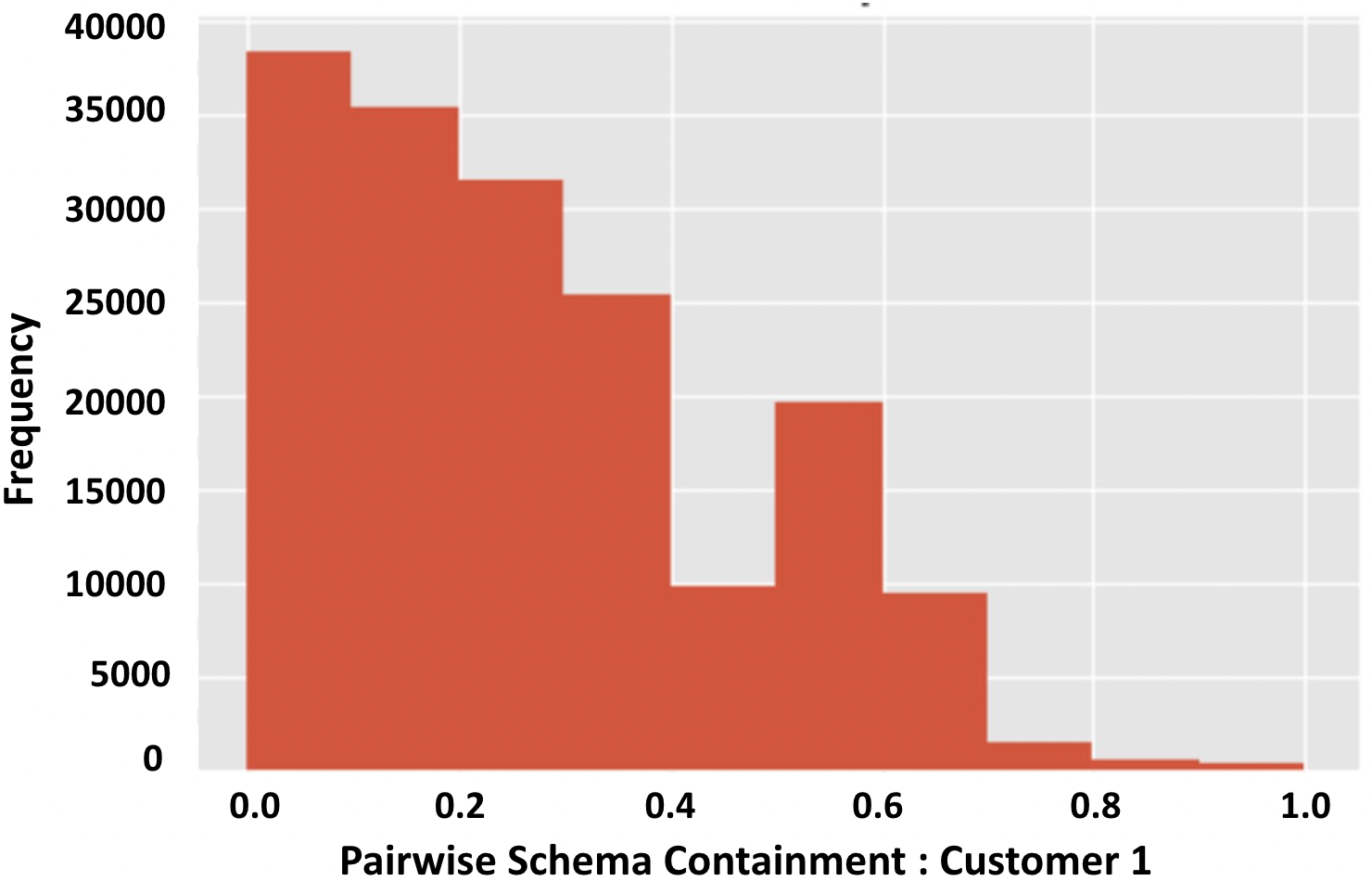}
  \end{subfigure}
  \hfill
  \begin{subfigure}{0.48\textwidth}
    \includegraphics[width=\linewidth]{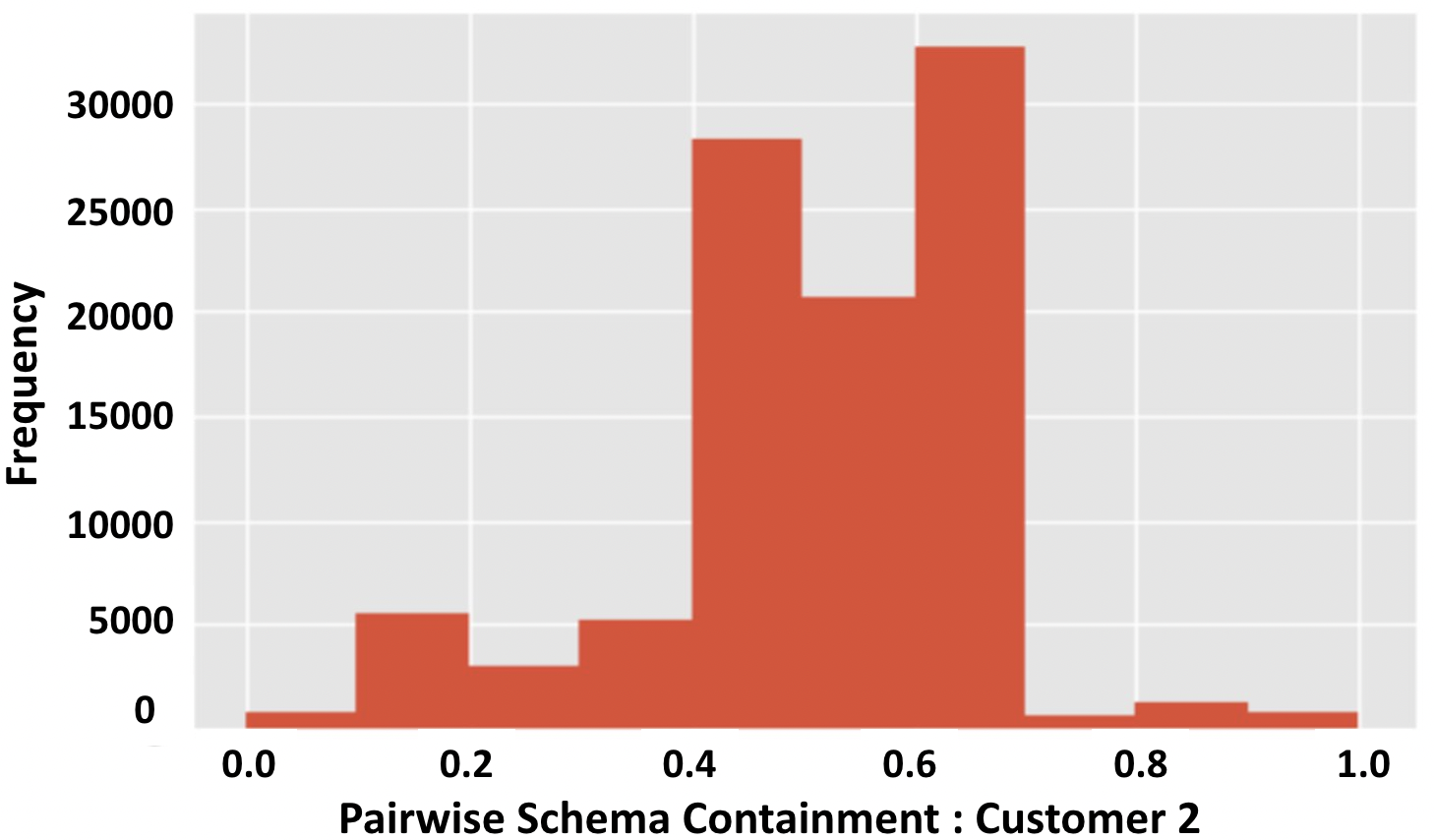}
  \end{subfigure}
  \caption{Histograms of schema containment between dataset pairs in 2 different customer orgs in our enterprise data lake. The x-axis ranges from 0 to 1 (zero schema intersection to complete containment between two tables). Clearly, the distributions of schema similarities vary across orgs.}
  \label{fig:schema_cont}
\end{figure}

\section{Related Work}
\label{sec:related}
\textbf{Joinability Discovery: } Data discovery based on joinability has been widely studied in the literature \cite{nextiajd, josie, bharadwaj, pexeso}. In general, joinability based approaches do not apply to the containment problem, as joinability is between two column pairs and not the complete set of columns from tables. It is possible for two tables from completely different domains to be joinable but they may not be related at all. We next discuss a few existing works in this field.

NextiaJD \cite{nextiajd} studies joinability using a learning-based approach, only considering attributes comprised of string datatypes (ignoring numerical attributes). The quality of join is computed based on containment (Jaccard Distance) and cardinality proportions between two attributes (columns). Similar to \cite{nextiajd}, JOSIE \cite{josie} too focuses on non-numerical attributes. JOSIE aims to find tables that can be joined with the given column on the largest number of distinct values, changing columns to sets. JOSIE minimizes the cost of set reads and uses inverted index probes to find the top-k related sets. However, building an inverted index is inefficient, both computationally and from the perspective of memory management. Moreover, considering columns as sets loses crucial row tuple based information which is important for identifying containment\footnote{For example, consider 2 tables with 2 columns. Table 1 entries are: (June, 20) and (May, 12). Table 2 entries are: (June, 12) and (May, 20). At a set level, there is column-wise containment in both columns, but as row tuples, there is no containment.}. Bharadwaj et al.\cite{bharadwaj} address the task of finding joinable tables, that occur in the join clauses in enterprise queries. They train a random forest classifier to predict whether two columns are related or not based on features such as metadata (column name similarities), and sample-based features (based on similarities of the first 1000 values in numeric columns). They find that a very large proportion of these joinable columns have similar schemas, which makes schema features useful for identifying joinability. However, as discussed earlier in Section \ref{sec:schema-simi}, this is not enough to judge the containment between tables. Additionally, we need to consider both numeric and non-numeric features.

\textbf{Inverted Index based approaches: }
Inverted index based data structures have been widely used to address the problem of set similarity, column joinability, and relatedness, e.g. \cite{silkmoth, setSimTheoreticalGuarantees, dataCivilizer, lcjoin, lshensemble, allign, josie}. These approaches primarily reduce the number of column/set comparisons. They do not optimize how the table similarity is computed and generally require full sweeps of all the rows in a table. As a result, such approaches do not scale well with increasing dataset sizes. We now discuss a few of these works.

LSHEnsemble\cite{lshensemble} identifies containment between domains (or sets of column attributes) of tables on the web. Although the number of domains can be very large (and this is what they focus on), identifying containment between two attributes is not expensive since domain sizes are relatively small as they are mainly keywords. This is not the case when we want to find the containment between entire columns. In our case, the number of datasets (equivalent to the number of domains) is less of a bottleneck, and the number of rows (equivalent to domain sizes) is extremely large, so techniques based on min-hash become computationally infeasible.
LCJoin\cite{lcjoin} tries to find all pairs of sets (R, S) such that $R \subset S$ where R and S belong to a collection of sets $\mathbb{R}$ and $\mathbb{S}$ respectively. Although we can apply \cite{lcjoin} to find subset relationships between column sets of different tables, it does not translate to overall table containment, for similar reasons as explained earlier. Similarly, Silkmoth \cite{silkmoth} tries to solve the problem of set relatedness and set containment. The metric used measures the relatedness of two sets by treating the elements as vertices of a bipartite graph and calculating the score of the maximum matching pairing between elements. Unfortunately, the metric suffers from expensive computational cost, taking $O(n^3)$ time, where n is the number of elements in the sets, for each set-to-set comparison. For data lakes, the time complexity of $O(n^3)$ is unacceptable because of the size (number of rows) in enterprise datasets, and thus \cite{silkmoth} cannot be applied in our setting. 

\textbf{Data Discovery: }
There are other approaches that look into discovering related datasets, e.g. \cite{d3l, juneau, ronin, optimOrgDataLake}.
D3L \cite{d3l} uses schema-and instance-based features (name, value, format, embeddings, and domain/distribution) to construct hash-based indices (locality sensitive hashing based) that map these features into a uniform distance space. This makes it possible to consider hash value similarities as measures for table relatedness. However, LSH indexing is computationally expensive, and dynamically adding a new dataset requires recomputing the indices over the entire space of data, making it computationally infeasible. 
RONIN \cite{ronin} enables user exploration of a data lake by navigation of a hierarchical structure. Here, table similarities are computed by averaging word embeddings of tokens in the table schema and metadata (description, tags) where available. This does not consider exact containment of schema nor does it consider content level similarity at all. Juneau\cite{juneau} finds target tables related to a source table, given a task such as finding tables with some augmented data, linked (joinable) tables, and finding tables where data cleaning has occurred (like removing null values). Their use-case is focused on small tables used by data scientists in a jupyter notebook environment, and evaluation is done on a 5GB corpus with an average table size of 1MB. Our use case however focuses on finding redundant data in a vast data lake at a much higher scale in terabytes. Additionally, they assume the presence of a data provenance graph, which is not given in our setting.

\textbf{Data Versioning: }
There has been some work \cite{dataversioning, orpheusDB} on selectively storing and deleting dataset versions. Although we are not dealing with dataset versions, it is useful to examine this related line of work. \cite{dataversioning} takes a static provenance graph as input, and trades off the storage and reconstruction cost of keeping or deleting dataset versions. This problem is related to the optimization part of our pipeline however the exact optimization formulation considered by us is different from \cite{dataversioning, orpheusDB}. 

\textbf{Storage Layer Deduplication: }
Some existing approaches de-duplicate data via block-level de-duplication at the storage layer\cite{primarydatadedup, blockdedup1, blockdedup2}. 

As an enterprise data management platform, we operate above the abstractions provided by Apache Spark \cite{spark} and therefore we do not have access to the storage layer nor do we have access to the individual files. In an enterprise data lake the sources of data are largely heterogeneous, so a system that de-duplicates at write-time would be infeasible. Even for post-processing de-duplication, the block sizes used in existing work are generally very small (of the order of KBs). This implies maintaining indexes for up to hundreds of millions of blocks since the size of even a single table can go up to tens or hundreds of terabytes. Storing such metadata in a ``chunk store", and/or running block compression is infeasible in our enterprise setting, and would take too long. On the other hand, \textbf{R2D2 can be run end-to-end as a Spark job in memory}. The only extra metadata we use is existing partition level metadata that already exists. It is also convenient to delete datasets or partitions completely - deleting chunks from certain parts of partitions and then reconciling these partitions would require maintaining such a system at the storage layer, which is difficult (we use cloud services for storage). Finally, chunking cannot handle logical operations/transformations - e.g. it would consider a sorted and unsorted table to be different. However, these tables are contained within one another. In fact, in Spark, since row ordering is not preserved, the tables are essentially equivalent. Our pipeline would be able to capture this.

\section{Problem Definition}
\label{sec:problem-statement}
Our goals are as follows. 
We first want to generate a mapping of the redundancy or, containment among the datasets in the data lake in a scalable, efficient manner. Second, we want to identify candidates for ``safe deletion'' from this pool of datasets to minimize the redundancy and associated storage and maintenance costs. We use the terms `datasets' and `tables' interchangeably in the rest of the paper because we work with tabular datasets (commonly used for storing digital transactions and clickstream event logs). We now formally define our problem statement.

We first define the term \textit{``containment fraction''} of one table in another. 
The containment fraction of a table $A$ in table $B$  is defined as $CM(A, B) = \frac{|A\cap B|}{|A|}$, where $n(B) \geq n(A)$. Note that we consider containment of table schema as well as full table contents. We use the same notation for both these cases.

If A and B are schemas, $n(B)$ refers to the length of the flattened schema set in B, and $|A\cap B|$ refers to the length of the intersection between the flattened schema sets. If they are tables, $n(B)$ refers to the number of rows in B and $|A\cap B|$ refers to the number of rows common to both tables. Throughout the rest of the paper, $A \subseteq B$ will denote that A is contained within B (i.e. $CM(A, B) = 1$). We will use this notation for both schema as well as table level containment. 
Next, we explain the term \textit{"safe deletion"}. A recommendation of deletion for a candidate dataset $D$ is considered to be ``safe deletion'' as long as (i) the (recommended) retained set of datasets contains at least one parent dataset $D_p$ from which $D$ can be reconstructed if accessed in the future, and (ii) this reconstruction can be done within a bounded latency (this will be further discussed in Section \ref{sec:optimization}.1) This is to ensure that the Quality of Service (QoS) experienced by the customer is high and as per the Service Level Agreements (SLAs).

We study the following two problems in this work:

\textbf{1. Containment Detection:} Identify all pairs of tables $(P, Q)$ such that $CM(P, Q) \geq T$, where $n(Q) \geq n(P)$. In this paper we focus on solving the problem of exact containment detection, i.e. when $T = 1$, which implies $P \subseteq Q$, or that all the rows in $P$ are present in $Q$. Note that this would also capture exact duplicates ($P = Q$). Approximate containment detection, i.e. when $T < 1$, introduces several additional challenges. These are out of the scope of this paper - however, we briefly discuss this case in Section 7.3.

\textbf{2. Recommending Candidates for Deletion:}  Based on the containment information, identify all tables that can be safely deleted, without any loss of information, so that the overall costs incurred due to data storage, data access as well and data maintenance are optimized. In order to formalize the above statement, first consider a dataset relationship graph, where the datasets are nodes, and there exists an edge directed from dataset node $B$ to $A$ if and only if : (a) $A$ is contained in $B$, that is, $CM(A, B) = 1$, (b) the transformation to generate $A$ from $B$ is known to the system, and (c) the expected latency to generate the transformation is bounded and within some threshold. For each such directed edge from $B$ to $A$, we denote $B$ as a parent of $A$ and we can reconstruct $A$ from $B$, if required by a customer-initiated access, in case $A$ is deleted. The problem is to identify the optimal set of such contained datasets for deletion while retaining at least one parent for every deleted dataset. This captures the ``safe deletion" requirement. Detailed problem descriptions along with important system level aspects for the deletion process are discussed in Section \ref{sec:dynamic-updates}.

\section{R2D2 Framework}
\label{sec:r2d2}
In this section, we describe the key components of the R2D2 framework. 
We approach the problem in a step-by-step hierarchical manner, which makes our method scalable as well as modular. We model the underlying space of datasets as a graph, capturing their pairwise relationships. We solve the data redundancy problem in two parts. First, we identify data containment between pairs of tables and build the corresponding containment graph. Second, we solve an optimization problem on top of this containment graph, that recommends the subset of datasets that can be ``safely deleted'' (as explained in Section \ref{sec:problem-statement}), while minimizing the overall storage, maintenance, and expected reconstruction costs. In order to solve the first problem, that is, building a containment graph, we approach the problem hierarchically, first building a schema based containment graph, then progressively eliminating edges from the graph to build the containment graph, where the existence of a directed edge indicates with high probability that the child (where the edge is incoming) is contained in the parent (where the edge is outgoing). 
In this work, we assume that exact schema level containment is a necessary condition for full containment. We plan to study approximate schema and content containment in future work. However, we do discuss aspects of approximate containment and some special cases that we can handle in Section \ref{sec:approx-containment}.
We next discuss the different steps of the R2D2 pipeline in detail.

\subsection{Schema Graph Builder: Computing Schema Containment Graph}
\label{sec:sgb}
At this stage, our goal is to build a schema containment graph that might contain additional edges (to be pruned at later stages) but \emph{no missing edges}. In other words, our goal is to ensure 100\% recall for detecting containment edges at this stage. Here, an edge $B\rightarrow A$ denotes the presence of a pairwise schema level containment ($A.schema \subseteq B.schema$). Note that schema level containment is a necessary, but insufficient condition for full table level containment. One possible brute-force approach would be to do pairwise comparisons between all pairs of schemas. For $N$ datasets (where $N$ is a very large number for even moderately sized enterprise data lakes), this would result in $O(N^2)$ comparisons and no additional edges. However, we propose to trade-off precision (without affecting the recall, that is, not missing any edges) with compute at this stage. 

We propose to find an initial set of (overlapping) clusters in the space of schema entities, and build the graph by examining the containment between any pair of schema entities that are members of the same cluster. If containment exists, we construct a directed edge directed from the larger schema to the smaller schema. Instead of using traditional clustering algorithms such as \textsc{K-Means} which would be computationally expensive, and would require featurization, we propose the following algorithm for generating the overlapping clusters. We first select `centers' in the schema entity space, and then assign cluster members to each center based on containment. Note that the center is also a cluster member. We call this algorithm \textsc{Schema-Graph-Builder} or \textsc{SGB} in short. 

We next describe the \textsc{SGB} Algorithm in detail.

\begin{enumerate}
\item First, construct a schema set for each dataset. For flat schemas, this is simply a list of all columns. For tree schemas, which are typical in enterprise workloads, the schema set is computed by flattening the schema tree so that the resulting tokens are distinct. For example, a schema tree with root $product$ and children $price$ and $id$ would be represented as $product.price$ and $product.id$. Empirically we have observed that this can be done in seconds.

\item Sort the list of schemas in non-increasing order of the size or cardinality of the corresponding schema sets.

\item Initialize the list of schema centers as empty sets.

\item Traverse the list of schemas in the sorted order.

\item For the next schema set, check if it is contained in any of the current centers.  
If not, then this becomes a new center. 
Else, add it as a cluster member to \textbf{each} center that it is contained in. Continue traversing (i.e., go to Step 4) till the end of the list.

\item Now, add edges between every pair of schemas within a cluster that satisfy the exact containment condition. We include the cluster center schema as a part of a cluster in this process. This builds the Schema Containment Graph. Note that one entity can have edges to members of multiple clusters. This is possible because the same node can be a member of two or more clusters.
\end{enumerate}

\begin{figure}[htbp]
    \centering
    \includegraphics[width=0.7\linewidth]{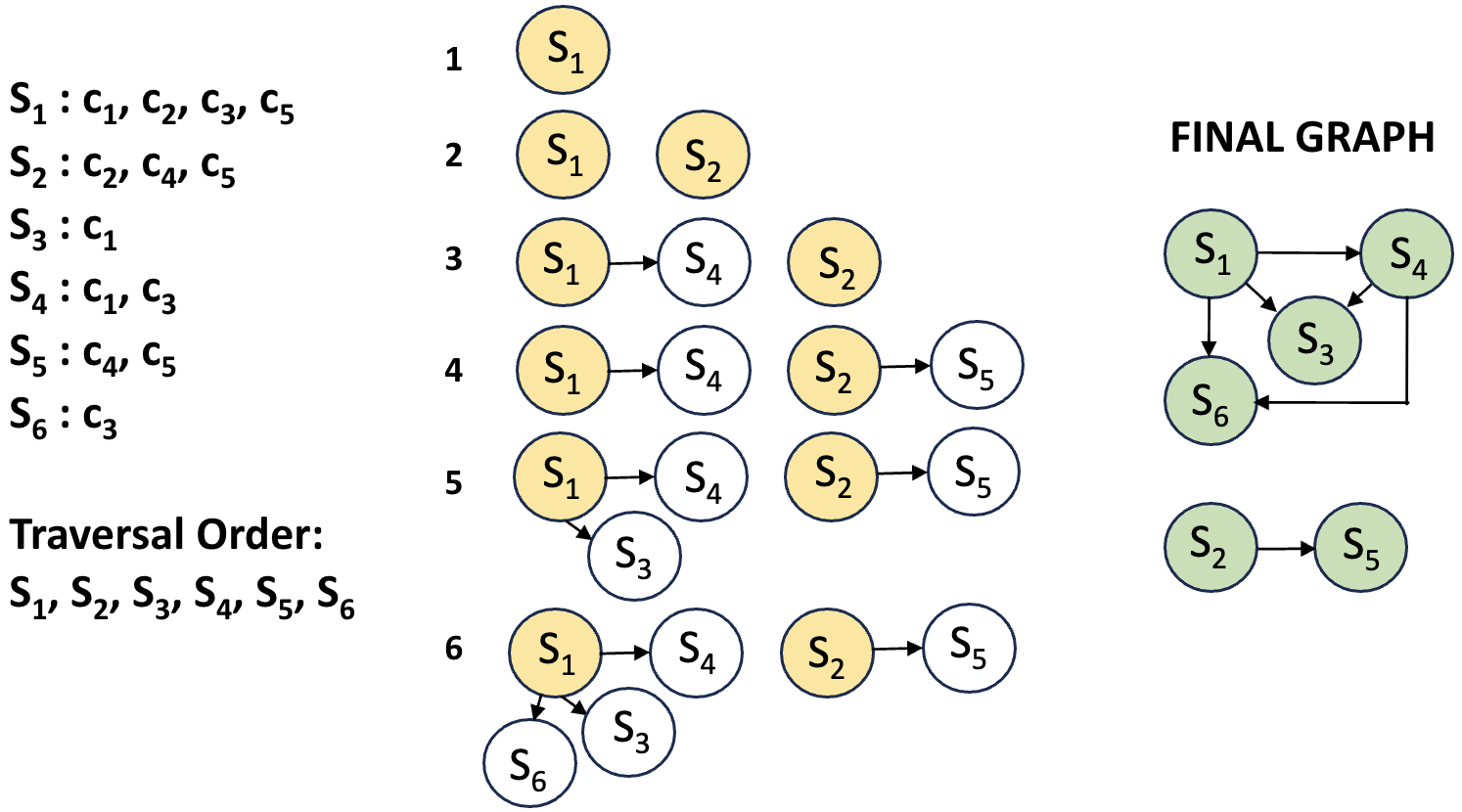}
    \caption{$S_i$ - schemas, $c_i$ - columns. Left: 6 schema definitions along with our order of traversal (non-decreasing order of size of the schema) in the SGB algorithm. Middle: Populating clusters in each iteration in the SGB algorithm. Cluster centers are in orange. Right: How we construct the final schema graph from these clusters.}
    \label{fig:schema-containment}
\end{figure}

Figure \ref{fig:schema-containment} illustrates the above steps on a small example in a step-by-step manner. The flattened columns in each schema along with the traversal order are shown on the left (steps 1-4). Each iteration wherein we populate the clusters and assign cluster centers is shown in the middle of the figure (step 5). Finally, we add edges within clusters to create the final schema graph based on whether containment exists or not (step 6). The pseudocode is given in Algorithm \ref{alg:sgb}.
For $K$ clusters, the total complexity of the sorting and traversal would be $O(N\log{N}) + O(K(N-K))$. 
We next prove that we will not miss any edge by \textsc{SGB} (even though many additional edges might be detected at this stage). 

\begin{algorithm}[htbp]
\caption{SGB: Schema Graph Builder} \label{alg:sgb}
\KwData{$S$ = list of schemas, $C$ = set of clusters}
\KwResult{$G = \textrm{Schema Containment Graph}$}
$cluster = Struct(center, members)$\\
$schema = Struct(size, dataset)$\\
\For{$schema\; s \in sorted(S, by=size, descending)$}{
    $contained = 0$\\

    \For{$cluster\; c \in C$}{
        $cc = c.center$\\
        \If{$(s.size <= cc.size) \;\textrm{\textbf{and}}\; (s\subseteq cc)$}{
            $c.members.append(s)$\\
            $contained=1$\\
        }
        
    }
    
    \If{$contained == 0$}{
        $newCluster = \mathbf{new}\; cluster(s)$\\
        $C.append(newCluster)$
    }  
}
\For{$cluster\;c \in C$}{
    \For{$distinct\;pairs (x,y) \textrm{ s.t. } x,y\in c.members$}{
        \If{$y \subseteq x$}{\tcp{WLOG x.size > y.size}
        $G.addEdge(x.dataset \rightarrow y.dataset)$ 
        }
    }
}
return G
\end{algorithm}

\begin{theorem}
    The schema containment graph identified by the algorithm \textsc{SGB} 
    will not be missing any correct (containment) edges. 
\end{theorem}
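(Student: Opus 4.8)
The plan is to reduce the theorem to a single co-clustering statement and then establish that statement by a short invariant argument. Concretely, I need to show that whenever $A.\mathit{schema}\subseteq B.\mathit{schema}$, the algorithm emits the directed edge $B\rightarrow A$. Since \textsc{SGB} only ever adds an edge inside its final loop, where it ranges over distinct pairs drawn from the members of one cluster (the cluster center counting as one of its own members), it suffices to prove that $A$ and $B$ end up recorded in a common cluster. So I would first state that reduction explicitly, and then spend the rest of the proof establishing the co-clustering.

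For the co-clustering claim I would proceed in two moves. First, establish the invariant that every schema, at the step it is processed, is recorded in at least one cluster whose center contains it and whose center was created at a step no later than the current one; this is immediate from the case split in Step 5, since a schema contained in no current center founds its own cluster (so it is its own center), while a schema contained in one or more current centers is appended to \emph{each} of them, and centers are never deleted afterwards. Second, given $A\subseteq B$, fix such a center $c$ for $B$, so $B\subseteq c$; because the traversal is by non-increasing schema size, $|B|\le|c|$, and combining with $A\subseteq B$ gives $|A|\le|c|$, while transitivity of set inclusion gives $A\subseteq c$. Assuming for now the strict case $|A|<|B|$, the schema $B$ is processed strictly before $A$, and $c$ was created no later than $B$, so $c$ is already an established center when $A$ is processed; the Step-5 membership test (size at most the center's size, and containment in the center) therefore succeeds for $A$ against $c$ and appends $A$ to $c$'s cluster. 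Now $A$ and $B$ both lie in cluster $c$, and the final loop, on considering this pair of distinct members and finding $A\subseteq B$, emits $B\rightarrow A$.

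The main obstacle I anticipate is not the core argument but two boundary issues. (i) The equal-size case, $|A|=|B|$: then $A$ and $B$ are the identical schema set and their relative order in the sorted list is not pinned down, so I would simply rerun the argument of the previous paragraph with whichever of the two is processed second playing the role of $A$ (its fixed center exists by the time the other is processed, by the same timing bookkeeping), and note that the final loop treats the pair symmetrically, so it may output both $B\rightarrow A$ and $A\rightarrow B$ — harmless, and in fact correct for duplicated schemas. (ii) The ``append to \emph{every} containing center'' behavior in Step 5 is load-bearing, not a convenience: if the algorithm stopped at the first center containing $A$, that center could be an earlier one that does \emph{not} contain $B$, breaking the co-clustering and losing the edge; so the proof must explicitly invoke that the inner loop over clusters does not terminate early, alongside the standing facts that centers persist for the rest of the run and that schemas are visited in non-increasing order of size. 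The transitivity step and the size inequalities are routine by comparison.
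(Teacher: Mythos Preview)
Your proposal is correct and follows essentially the same route as the paper: both arguments hinge on transitivity of schema containment through a cluster center $c$ with $A\subseteq B\subseteq c$, so that $A$ and $B$ land in a common cluster and the final pairwise loop emits $B\rightarrow A$. Your version is organized directly via an invariant rather than by contradiction-plus-case-split on whether $B$ is itself a center, and you are more careful than the paper about the equal-size boundary case and about why appending to \emph{every} containing center (rather than just the first) is essential---but the underlying mathematics is the same.
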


 \begin{proof}
    Let us assume for the sake of contradiction that the schema graph built by SGB for a given data lake has missed at least one correct edge $B \rightarrow A$, that is, $A.schema \subseteq B.schema$. That means when $A.schema$ was being processed or traversed in the list of schemas, the containment with $B.schema$ was undetected. Since $B.schema$ is longer than $A.schema$, it would have been traversed earlier in the list. First, consider the case that $B.schema$ is a cluster center. Since the algorithm checks containment with each center explicitly, this edge would have been detected, leading to a contradiction. Now, consider the case that $B.schema$ is not a cluster center. Then, by definition of the algorithm, $B.schema$ must have been marked as a cluster member of at least one existing cluster, say $C.schema$. This further implies that $B.schema$ is fully contained in $C.schema$. Therefore, when $A.schema$ is being traversed by SGB, it would have detected containment to at least one cluster center $C.schema$ and would have marked $A.schema$ as a cluster member of $C.schema$. However, by definition of the algorithm, we build an edge between every pair of cluster members, which would include an edge between $A.schema$ and $B.schema$ since both are members of the same cluster. This again leads to a contradiction. This completes the proof. 
 \end{proof}

\subsection{Min-Max Pruning}
\label{sec:mmp}
The second step of our algorithm (Algorithm \ref{alg:mmp}) takes the schema graph as input and chooses which edges to prune. Note that our final goal is to have a graph where a directed edge $B\rightarrow A$ between two nodes (datasets) represents that $A \subseteq B$, i.e. $A$ is contained within $B$. In this step, we exploit the relationship between the minimum and maximum values of columns in datasets that are contained within one another. Assume $n(A) <= n(B)$. Consider $C$ to be the set of common columns between $A$ and $B$. If $A \subseteq B$, it is necessary for (i) $\min A.c \geq \min B.c \; \forall\; c \in C$ and (ii) $\max A.c \leq \max B.c \; \forall\; c \in C$. If any one of these conditions is violated for any column, we can eliminate that edge and safely conclude that containment is not present. Moreover, for datasets that are partitioned and stored in parquet format, values such as the columnar minimum and maximum are often stored as metadata. This makes looking up these values very fast, since a full table scan is not necessary - looking at partition level metadata would suffice. Caching the columnar minimum and maximum is another option that would improve the speed even further.

\begin{algorithm}[h]
\caption{MMP: Min-Max Pruning} \label{alg:mmp}
\KwData{$G$ = Schema Containment Graph}
\KwResult{$G$ = Updated Containment graph}
\For{$edge\; x\rightarrow y \in G.edges$}{
    $commonCols = x.schema \cap y.schema$\\

    \For{$c \in commonCols$}{
        \If{$((x.c.min > y.c.min) \textrm{ \textbf{or} } (x.c.max < y.c.max))$}{
        $G.removeEdge(x\rightarrow y)$\\
        \textbf{break}
        }
    }
}
return G

\end{algorithm}
\subsection{Content Level Pruning}
\label{sec:clp}
Finally, we prune the graph based on the contents of the datasets under consideration. We exploit the insight that if table level containment holds, i.e. $A \subseteq B$, it will also hold between a sample of $A$ and $B$, i.e. $sample(A) \subseteq B$.
For instance, consider a categorical or timestamp type of column, such as "id" or "timestamp", which are commonly found in enterprise data. Consider the rows where specific values are present in such columns - this is equivalent to running a query such as \verb|SELECT * FROM A WHERE column = value|. If any of the rows returned by this sample are not present in $B$, we can safely conclude that containment does not hold between $A$ and $B$ either. Then we can prune the edge $B\rightarrow A$ from the graph.
Choosing to sample values from $A$ that result in a lower number of rows will lower the time complexity of checking containment with $B$. Thus, this idea can be extended further - instead of running a query with a single \verb|WHERE| filter, we can simultaneously use multiple \verb|WHERE| filters on different columns and values to make this step even faster. 
Another possible extension is to sample from both $A$ and $B$ using a query with \verb|WHERE| filters, and then check containment between the sampled rows. For example, if $s_A$ = all the rows of A with a particular timestamp $t$ and $s_B$ = all the rows of B with timestamp $t$, containment should be preserved between them, i.e. $s_A \subseteq s_B$ if $A \subseteq B$. Otherwise, we can conclude $A \nsubseteq B$. This would hold as long as we sample using \verb|WHERE| queries, since the containment relationship is unaffected by such a sampling operation. The key in all of the above variations is that the sampling does not need to scan the full table. If information about the column to be sampled is known beforehand (e.g. we may know a range of timestamp values present in the dataframe), or the data has been partitioned based on timestamp values, or if the database is indexed (all of which are common scenarios), only certain rows of the table need to be accessed while sampling. The pseudocode is given in Algorithm \ref{alg:clp}. 

\begin{algorithm}[h]
\caption{CLP: Content Level Pruning} \label{alg:clp}
\KwData{$G$ = Containment Graph after MMP, $s$=max columns to consider, $t$=max rows to sample}
\KwResult{$G$ = Updated Containment graph}
\For{$edge\; x\rightarrow y \in G.edges$}{
    $commonCols = x.schema \cap y.schema$\\
    $searchCols = commonCols.sample(s)$\\

    $sY = sample(y, searchCols, maxRows=t)$\\
    $combined = sY.join(x, type="left-anti")$\\
    \If{$!combined.isEmpty$}{
        $G.removeEdge(x\rightarrow y)$
    }
}
return G

\end{algorithm}

Let us now examine the number of samples that would be required to give a probabilistic guarantee on the correctness of containment inference, similar to a PAC bound. 
\begin{theorem}
    Given a pair of datasets with fraction of containment at most $(1 - \epsilon)$, the number of samples $n_s$ required to ensure that we are able to eliminate or prune the edge between the pair of datasets with probability at least $(1 - \delta)$ is $n_s \geq \ln{\frac{1}{\delta}}/\ln{\frac{1}{1 - \epsilon}}$.
\end{theorem}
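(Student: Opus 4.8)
The plan is to model the sampling process as a sequence of independent Bernoulli trials and compute the probability that the edge is (incorrectly) retained, then force that probability below $\delta$. Concretely, suppose $A$ and $B$ are the datasets under consideration with $CM(A,B) \le 1 - \epsilon$, meaning at most a $(1-\epsilon)$ fraction of the rows of $A$ are present in $B$. The content-level pruning step (Algorithm \ref{alg:clp}) succeeds in eliminating the edge $B \rightarrow A$ precisely when at least one sampled row of $A$ is \emph{not} contained in $B$ — such a row makes the left-anti join non-empty. So the edge is wrongly kept only if every one of the $n_s$ sampled rows happens to land in the contained part of $A$.

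First I would formalize the per-sample failure event. When we draw a row of $A$ uniformly at random (or via the \texttt{WHERE}-filter sampling described in Section \ref{sec:clp}, which preserves the containment structure), the probability that this row lies in $A \cap B$ is exactly $CM(A,B) \le 1 - \epsilon$. Hence the probability that a single sampled row fails to reveal non-containment is at most $1 - \epsilon$. Second, since the samples are drawn independently, the probability that \emph{all} $n_s$ sampled rows fail to reveal non-containment — i.e. the probability the edge is not pruned — is at most $(1 - \epsilon)^{n_s}$. Third, I would demand that this quantity be at most $\delta$, i.e. $(1-\epsilon)^{n_s} \le \delta$. Taking natural logarithms of both sides and dividing by the (negative) quantity $\ln(1-\epsilon)$ flips the inequality, yielding $n_s \ge \ln\delta / \ln(1-\epsilon) = \ln(1/\delta)/\ln(1/(1-\epsilon))$, which is exactly the claimed bound. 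Finally, I would note that the probability of correct pruning is then at least $1 - \delta$, as required.

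The main subtlety — more a modeling point than a mathematical obstacle — is justifying that the draws can be treated as independent with the stated success probability under the actual sampling scheme. Uniform sampling with replacement gives independence immediately; for the \texttt{WHERE}-filter variants, one has to argue that conditioning on a particular column value does not bias membership in $A \cap B$ in a way that helps the adversary, or alternatively restate the bound in terms of the containment fraction \emph{within} the sampled slice. A secondary point is the with-replacement versus without-replacement distinction: sampling without replacement only \emph{decreases} the failure probability (the contained rows get "used up"), so the bound $(1-\epsilon)^{n_s}$ remains a valid upper bound and the theorem holds a fortiori. I expect the rest to be a routine logarithm manipulation, with care only needed about the sign of $\ln(1-\epsilon)$ when rearranging the inequality.
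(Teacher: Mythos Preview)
Your proposal is correct and follows essentially the same approach as the paper: model each sampled row as independently landing in the contained portion with probability at most $1-\epsilon$, bound the probability that all $n_s$ samples do so by $(1-\epsilon)^{n_s}$, set this below $\delta$, and take logarithms. The paper explicitly assumes uniform sampling with replacement and omits the discussion you add about \texttt{WHERE}-filter sampling and sampling without replacement, so your treatment of the modeling subtleties is, if anything, more careful than the original.
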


\begin{proof}
Consider two datasets $D_1$ and $D_2$, where the fraction of containment of $D_2$ in $D_1$, that is, $CM(D_2, D_1) \leq (1 - \epsilon)$. 
Ideally, we want to be able to prune the edge between $D_1$ and $D_2$ with a high probability $P(D_2 \notin D_1) \geq (1 - \delta)$, given the containment is $\leq (1 - \epsilon)$. 
This requires that at least $1$ sample be retrieved from the non-intersecting rows of $D_2$ and $D_1$. Let the sampling be uniformly random with replacement and let the minimum number of samples required be $n_s$. We denote the part of $D_2$ contained in $D_1$ as $D^c_2$.
\begin{align} 
    P(D_2 \notin D_1) &= 1 - P(D_2 \subseteq D_1) \ = 1 - \frac{|D^c_2|^{n_s}}{|D_2|^{n_s}}\ = 1 - {\left(\frac{D^c_2}{D_2}\right)}^{n_s}
\end{align}
The containment $\frac{D^c_2}{D_2} \leq (1 - \epsilon)$. We want the probability $P(D_2 \notin D_1)$ to be $\geq (1 - \delta)$ for some small $\delta$. Substituting, we require
\begin{align}
    P(D_2 \notin D_1) &\geq 1 - (1-\epsilon)^{n_s} \ \geq (1 - \delta)
\end{align}

Therefore, we need $\delta \geq (1 - \epsilon)^{n_s}$. This gives 
$n_s \ln{\frac{1}{1-\epsilon}} \geq \ln{\frac{1}{\delta}}$. 
or, $n_s \geq \frac{\ln{\frac{1}{\delta}}}{\ln{\frac{1}{1 - \epsilon}}}$. 
As an example, for $\delta = 0.05$ (that is, a probability of 95\%) and fraction of containment at most $0.9$ or 90\% (that is, $\epsilon = 0.1$), $n_S \geq 29$.
\end{proof}

\section{Minimizing Redundancy}
\label{sec:optimization}
Thus far, we have presented the pipeline to construct a dataset relationship graph, where the relationship denotes \emph{exact} containment. Specifically, if a dataset $q$ is completely contained in another dataset $p$, then with high probability, there exists a directed edge from the node $p$ to the node $q$ in this graph. Now, we would like to exploit this containment information to reduce the storage, maintenance, and related compute costs. We propose to ``safely delete'' some of the contained datasets, while retaining the others, thereby lowering the overall costs.

\subsection{Graph Pre-processing for  ``Safe Deletion''}
We first pre-process the dataset relationship graph to encode the reconstruction (or, transformation) information between datasets. This is required to ensure the ``safe deletion'' requirement. For each directed edge from dataset $p$ to dataset $q$, we first ensure that the \emph{transformation} used to generate the child dataset $q$ from the parent dataset $p$ is known. Such transformation can be potentially known through human input\footnote{Human examination of every edge is feasible at the current step of the pipeline, as the number of edges to consider has been reduced greatly at this stage. For enterprise data, we have empirically observed that the number of edges remaining at this step is of the order of ~100-300 across customers.} or inferred by a model. 
We propose to use human input because of the sensitive and critical nature of the client data. Inference through a model is currently a work in progress and is out of scope for this paper. In case the transformation between a pair of datasets is not known or cannot be inferred by the human expert, we prune the corresponding edge.
Let us assume that the transformation for the edge $e$ from $p$ to $q$ has been given by the human expert. We need to estimate the (monetary) reconstruction cost and latency associated with such a reconstruction. The reconstruction cost mainly comprises of the read cost of the parent dataset $p$\footnote{Generally, enterprise use cases do not require heavy compute for data analytics, hence the compute charges are negligible.} and the write cost of the child dataset $q$. The read and write costs can be easily estimated by the size of the parent and child datasets respectively and the read cost and write cost per unit size of data as charged by the corresponding cloud provider. Let the read cost per unit size of data be $r$ and the write cost per unit size of data $w$. Then the read cost is estimated as: $r \cdot s_p$, where $s_p$ is the size of the parent dataset $p$ in bytes, and the write cost is estimated as: $w \cdot s_q$ where $s_q$ is the size of the child dataset $q$ in bytes. Therefore, the estimated total cost $C_e$ of reconstructing $q$ from $p$ is given by: $C_e \approx r \cdot s_p + w \cdot s_q$. Generally, the cloud costs for write operations in the premium and hot tiers are an order of magnitude higher than the read costs\footnote{\url{https://azure.microsoft.com/en-in/pricing/details/storage/data-lake/}.}, hence if the sizes $s_p$ and $s_q$ are comparable, then the cost $C_e$ can be further approximated by the write costs alone: $C_e \approx w \cdot s_q$.

Note that ``safe deletion'' does not simply ensure that a dataset can be reconstructed (if deleted) on demand, it also ensures that the associated latency of such a reconstruction is bounded to maintain the Quality of Service (QoS) experienced by the customer. Hence, if the estimated latency exceeds that QoS bound, that edge needs to be deleted. 

The read latency of $p$ and the write latency of $q$ can be estimated from the historical logs or directly from the cloud provider's latency guarantees. For estimating from historical logs, a simple approach would be to average the write (read) latency values normalized by the corresponding dataset sizes on which the write (read) operations were performed as recorded in the historical logs, and multiply it by $s_q$ ($s_p$). 
Let the estimated latency normalized by byte sizes for write be $w_\ell$ and for read be $r_\ell$.
The total latency of edge $e$, $L_e$ is therefore estimated as $L_e \approx r_\ell \cdot s_p + w_\ell \cdot s_q$. 
Now, if $L_e\geq Th$, where $Th$ is a latency threshold determined by QoS requirements by clients, then we remove $e$.

\subsection{Opt-Ret: Optimally Retain Datasets to Minimize Expected Costs}
We now present an optimization problem and solution for reducing the overall costs, which takes the above pre-processed graph as input and recommends retaining a subset of the datasets while deleting the others. Our goal is to minimize the overall cost while maintaining the following constraint. 
We require for every dataset deleted, there should be at least one parent retained. This ensures that the dataset can be reconstructed from existing datasets, if the need arises, within a bounded latency. The total cost we would like to minimize consists of storage costs, data maintenance costs as well as the additional (expected) compute costs for reconstruction of deleted datasets. Let us denote this problem as \textsc{Opt-Ret}.

Now we will formally define the optimization problem. We are given as input a directed graph $\mathcal{G} = \{\mathcal{V}, \mathcal{E}\}$, where $\mathcal{V}$ denotes the set of nodes (datasets), $\mathcal{E}$ denotes the set of directed edges between datasets denoting the containment relation. 
Retaining a node contributes to both storage costs as well as maintenance costs per node (due to data hygiene and other data management operations, including privacy requests and GDPR). 
Let the size of a dataset node $v$ be $S_v$. 
Considering the storage cost per unit size of data for a billing period to be $C_s$, retaining the node contributes $C_s \cdot S_v$ to the storage cost. Let the average (or, expected) compute costs for maintenance per unit size data be $C_m$ (this can be determined from historical data maintenance logs, and corresponding compute costs). Let $f_v$ denote the expected frequency of such maintenance operations for $v$ for a billing period. The total cost of retention of $v$ for the billing period is therefore $(C_s + C_m\cdot f_v)\cdot S_v$. 

Let $\mathcal{P}_v$ denote the set of parents of $v$. From every node $u \in \mathcal{P}_v$, $v$ has an incoming edge and $v$ has incoming edges only from the nodes in $\mathcal{P}_v$. Now, let us consider the cost of deleting a node. If we delete $v$, then there must not be any loss of information, hence at least one parent should be retained. However, as discussed earlier, there would be additional compute costs for the potential reconstruction of datasets. Let the (expected) compute cost of reconstructing $v$ from its parent $u$ through the edge $e = (u,v)$ be $C_e$. (Recall that an edge exists only if the latency constraints are satisfied in expectation.)

Let the expected number of (customer initiated) accesses to dataset $v$ over a billing period (this can be determined from historical logs) be $A_v$. 

If the parent $u$ is used for reconstructing $v$ if accessed, then the cost is given by $C_e \cdot A_v$, where $e = (u,v)$. Let us use $x_v$ as the indicator variable of whether $v$ is retained or not ($x_v = 1$ for retained). Let us use $y_{e = (u,v)}$ as the indicator of whether $u$ is the parent used for reconstruction of $v$ in case of deleting $v$ ($e$ is the directed edge from $u$ to $v$). 
A dataset $v$ can be reconstructed from a parent $u$  only if the parent $u$ is retained. Hence, we need to add a constraint $y_{e = (u,v)} \leq x_u$ for every edge $e = (u,v)$. 
There must be a parent retained for every $v$ deleted. 

Hence we add a constraint: $x_v + \sum_{e = (u,v) \ \forall u \in \mathcal{P}_v}{y_e} \geq 1$.

However, we want $y_e$ to be $1$, only when the child node is deleted, hence, we add a constraint $y_{e = (u,v)} \leq 1 - x_v$. Next, we describe \textsc{Opt-Ret} in Equation \ref{eq:opt} .

\begin{align}
\label{eq:opt}
    &\text{Minimize } \sum_{v \in \mathcal{V}}{(C_s + C_m f_v) S_v x_v} + \sum_{v \in \mathcal{V}}{{\sum_{e = (u,v) \ \forall u \in \mathcal{P}_v}{A_v C_e y_e}}} \\\nonumber
    & \text{s.t. } \\\nonumber
    & y_{e = (u,v)} \leq x_u \quad \forall e=(u,v) \in \mathcal{E}, \quad u \in \mathcal{P}_v, \quad v \in \mathcal{V} \\\nonumber
    & x_v + \sum_{e = (u,v) \ \forall u \in \mathcal{P}_v}{y_e} \geq 1 \quad \forall v \in \mathcal{V}\\\nonumber 
    & y_{e = (u,v)} \leq 1 - x_v \quad \forall e = (u,v) \in \mathcal{E}, \quad \forall v \in \mathcal{V}\\\nonumber
    & x_v \in \{0, 1\} \quad \forall v \in \mathcal{V}\\\nonumber
    & y_e \in \{0, 1\} \quad \forall e \in \mathcal{E}\\\nonumber
\end{align}

\subsection{Linear Algorithm for Line Graph}
The problem \textsc{Opt-Ret} admits a linear time algorithm for the special case when the pruned directed graphs are line graphs. That is, every parent has a single child, and every child has a single parent. This case can often arise in enterprise data lakes, for example, when a sequence of edits or operations are performed starting from a root dataset, saving every intermediate output. This special case is solved by an efficient dynamic program \textsc{Dyn-Lin}. 

We next define the recursive equations for \textsc{Dyn-Lin}. Without loss of generality, we could solve the problem optimally for each line graph (that the input consists of). For simplicity, here we discuss the optimal solution for a single line graph.  Let there be $N$ nodes in the line graph, where node $0$ is the root (that is, no incoming edges) and node $N-1$ is the leaf (that is, no outgoing edges). Let $ALG[i]$ denote the cost of the optimal solution for nodes $[0, \ldots, i]$. 
\begin{align}
\label{eq:dyn-lin}
    ALG[0] &= \left(C_s + C_m\cdot f_0\right)\cdot S_0\\\nonumber
    ALG[1] &= \min{\{\left((C_s + C_m\cdot f_1)\cdot S_1\right), \left(A_1\cdot C_{0, 1}\right)\}} + ALG[0]\\\nonumber
    \forall i &\in \{2, \ldots, N-1\}:\\\nonumber
    ALG[i] &= \min
    \begin{cases}
       \left((C_s + C_m\cdot f_i)\cdot S_i + ALG[i-1]\right), \\
       \left(A_i\cdot C_{i-1, i} + (C_s + C_m\cdot f_{i-1})\cdot S_{i-1} + ALG[i-2]\right)
    \end{cases}
\end{align}

\begin{theorem}
There exists an optimal linear time algorithm ($O(N)$) that finds the optimal cost and the corresponding optimal solution given a directed line graph. 
\end{theorem}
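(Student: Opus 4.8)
The plan is to prove that the dynamic program \textsc{Dyn-Lin} of Equation~\ref{eq:dyn-lin} computes the optimum of \textsc{Opt-Ret} (Equation~\ref{eq:opt}) on a line graph, and that it — together with a standard backtracking pass — runs in $O(N)$ time. First I would specialize the constraints of \textsc{Opt-Ret} to the path $0\to 1\to\cdots\to (N-1)$, where the only parent of node $i$ is node $i-1$. This yields three structural facts: (i) node $0$ has no incoming edge, so $x_0 + 0 \ge 1$ forces $x_0 = 1$ — the root is always retained; (ii) if $x_i = 0$ then the unique incoming edge must carry $y_{(i-1,i)} = 1$, and $y_{(i-1,i)} \le x_{i-1}$ then forces $x_{i-1} = 1$, so the deleted nodes form an independent set of the path (no two consecutive deletions); and (iii) whenever $x_i = 1$ the constraint $y_{(i-1,i)} \le 1 - x_i = 0$ forbids reconstructing $i$, and any feasible solution that sets a superfluous $y_e = 1$ is only costlier, so without loss of generality an optimal solution either retains $i$ at cost $(C_s + C_m f_i)S_i$ or deletes $i$ — with $i-1$ retained and the edge $(i-1,i)$ carrying its reconstruction at cost $A_i C_{i-1,i}$. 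This rewriting makes the ``atoms'' of the recurrence match the true objective, the one delicate point being that the retention cost of $i-1$ must be charged exactly once even when it is simultaneously demanded by $i-1$ itself and by a deleted child $i$.

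Next I would prove $ALG[i] = OPT(i)$ by induction on $i$, where $OPT(i)$ denotes the optimum of \textsc{Opt-Ret} restricted to the prefix $\{0,\ldots,i\}$ (so $OPT(N-1)$ is the quantity sought). The engine of the proof is a locality lemma: since each node's only parent is its immediate predecessor, every constraint of \textsc{Opt-Ret} couples only a pair of consecutive nodes, hence the restriction of a feasible prefix solution to a shorter prefix is again feasible, and appending a decision about a new rightmost node never propagates backwards across it. With this optimal-substructure property in hand, the inductive step splits on the rightmost node $i$. If $i$ is retained, the remainder is an instance of \textsc{Opt-Ret} on $\{0,\ldots,i-1\}$ with no residual coupling to $i$, giving the term $(C_s + C_m f_i)S_i + ALG[i-1]$. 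If $i$ is deleted, then $i-1$ is forced retained (hence not itself reconstructed), the edge $(i-1,i)$ is forced to carry the reconstruction, and the remainder is an instance of \textsc{Opt-Ret} on $\{0,\ldots,i-2\}$ — crucially not on $\{0,\ldots,i-1\}$, because $ALG[i-1]$ need not keep $i-1$ retained — giving the term $A_i C_{i-1,i} + (C_s + C_m f_{i-1})S_{i-1} + ALG[i-2]$. Taking the minimum of the two reproduces Equation~\ref{eq:dyn-lin} exactly; the base cases $ALG[0]$ (root forced retained) and $ALG[1]$ are verified directly, treating the missing $ALG[-1]$ term as $0$.

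For the algorithmic claims I would note that each entry $ALG[i]$ is computed in $O(1)$ arithmetic from $ALG[i-1]$ and $ALG[i-2]$, so filling the table costs $O(N)$; storing for each $i$ which of the two cases attained the minimum lets a single right-to-left pass recover the optimal indicators $\{x_v\}$ and $\{y_e\}$ in $O(N)$. When the pruned graph is a disjoint union of line graphs (the stated ``every parent has a single child, every child has a single parent'' regime), the objective and all constraints decouple across connected components, so running \textsc{Dyn-Lin} on each component and summing the costs is optimal and still linear in the total number of nodes.

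I expect the main obstacle to be the case ``$i$ deleted'': pinning down rigorously why the subproblem index drops to $i-2$ rather than $i-1$ — necessity, because $ALG[i-1]$ does not guarantee that $i-1$ is retained, and sufficiency, because a retained $i-1$ severs every dependence of $\{0,\ldots,i-2\}$ on the suffix by the locality lemma — and making the ``paid once'' accounting for the shared retention cost of $i-1$ precise, so that no feasible solution is either double-counted or under-counted when matched against a table entry. Everything else is routine induction and bookkeeping.
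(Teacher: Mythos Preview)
Your proposal is correct and follows essentially the same approach as the paper: both arguments establish the base cases, then prove by induction on $i$ that the recurrence $ALG[i]$ equals the prefix optimum via a two-way case split on whether the rightmost node is retained or deleted (the latter forcing retention of $i-1$ and dropping to the $i-2$ subproblem), and both conclude $O(N)$ from a single forward fill plus a backtracking pass. Your write-up is considerably more careful than the paper's own proof --- in particular your explicit derivation of the ``no two consecutive deletions'' structure from the ILP constraints, the locality/optimal-substructure lemma, and the accounting for the retention cost of $i-1$ being paid exactly once --- but these are elaborations of the same strategy rather than a different route.
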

\begin{proof}
The base cases are as follows. $ALG[0]$ is equal to the cost of retaining node $0$, since this node must be retained. There is no parent for node $0$ from which it could be reconstructed. $ALG[1]$ is the greedy choice (minimum cost) between the retention cost of node $1$ versus the deletion (that is, reconstruction) cost of node $1$, because node $0$ is retained by default. 
Consider $ALG[3]$. This is the greedy choice between retaining node $2$ and then adding the cost of the optimal solution till node $1$, or deleting node $2$, therefore compulsorily retaining node $1$, paying its retention cost plus the cost of the solution up to node $0$ (which is basically the retention cost of node $0$). Clearly, this choice is optimal, as any other choice would lead to sub-optimality in the decision, which could be improved by bringing in this greedy choice or violating the parent retention constraint. Now, let us assume by induction hypothesis that the algorithm is optimal for all $k\leq i< N-1$. Now, consider the solution for node $i+1$. It is again the greedy minimum cost choice between the costs of retaining the node $i+1$, plus the optimal cost till $i$, $ALG[i]$, and that of deleting node $i+1$, compulsorily retaining node $i$, paying its retention cost, plus the optimal cost till $i-2$. Since there are only two possibilities at each step, the greedy choice is optimal, and replacing with other choices would either lead to a higher cost that can be lowered by switching to the choice recommended by backtracking \textsc{Dyn-Lin}, or lead to violation of constraints. Therefore, by induction, we prove that \textsc{Dyn-Lin} is optimal. 
The algorithm makes one pass through all the nodes to build the optimal cost solution, and then another pass while backtracking to determine the optimal sequence of nodes to retain, leading to overall $O(N)$ complexity. 
\end{proof}

\section{Experimental Results}
\label{sec:experiments}
\subsection{Datasets}
We evaluate our solution on both enterprise and synthetic data. 
\begin{itemize}
    \item {\textbf{Enterprise Data}} - we used 3 enterprise customer accounts with many datasets, each with data ranging from 0.6TB - 42TB in size.
    \item {\textbf{Synthetic Data}} -  We constructed two sets of synthetic data. One was constructed using an existing public dataset that is commonly used in this domain, namely Table Union Benchmark \cite{tableunion}. We generated synthetic data consisting of around 300 tables, with the entire dataset size being 324MB. 
    The other dataset was created using tables from Kaggle competitions as root tables (refer Section \ref{sec:synthetic}). The synthetically generated dataset consists of 140 tables, with the entire dataset size being 24GB. Further details regarding the data generation process are given below.
\end{itemize}

The enterprise data is partitioned and stored in parquet format in our enterprise Azure data lake (ADLS Gen2\footnote{\url{https://learn.microsoft.com/en-us/azure/storage/blobs/data-lake-storage-introduction}}). Experiments on enterprise data were run using Apache Spark and Azure Databricks on an L16s storage optimized cluster (128GB memory, 16 cores). We queried data and metadata directly from the Azure data lake using Spark, and the reported times of our experiments reflect this. Note that Spark does not have an indexed database. Using an indexed database can potentially reduce the processing time further since it can speed up certain operations like sampling. Experiments on synthetic data were run in jupyter notebooks on a c5.24xlarge AWS CPU cluster (192GB RAM, 96 CPUs) without parallelism. 

\subsubsection{Synthetic Data Generation Process}
\label{sec:synthetic}
In typical data lakes, data is often processed and transformed, and the results are saved as a new table. We started with a set of root datasets from Table Union Data \cite{tableunion} (which has been used in prior work) and Kaggle competitions. We simulate the main types of transformations and processing that occur in real data lakes via the following:

\begin{itemize}
    \item Size reduction via Sampling: We generated synthetic {\verb|SELECT| ... \verb|FROM| ... \verb|WHERE| ...} queries based on a skewed Zipfian distribution whose parameters were fitted based on enterprise queries that followed the same distribution.
    \item Adding rows: The new values are chosen by sampling from each respective column's distribution.
    \item Adding columns: Adding new columns to the tables which are linear combinations of existing numerical columns.
    \item Noise: Adding noise to certain numerical columns.
    \item A combination of the above.
\end{itemize} 

\subsection{Generating Ground Truth}
We created the ground truth for both schema and content level containment in a brute force manner. For each pair of tables, we checked the containment of schema sets to compute the ground truth schema graph. Then for each edge, we checked whether each row of the smaller table occurs in the larger table to compute the ground truth containment graph. 

We compare R2D2's results with several baselines, including the ground truth graphs for both schema and content level containment. The ground truth graphs can be used to check for correctness. We also compare the time and complexity of generating these graphs to understand the extent of improvement we bring.

\subsection{Evaluating Containment Graph}
We compare the graph after each step of the pipeline with the ground truth containment graph for both enterprise and synthetic data in Tables \ref{tab:results-enterprise} and \ref{tab:results-synthetic}, respectively. After every step of our pipeline, we reduce the number of incorrect edges in the graph by a significant number. In Tables \ref{tab:results-enterprise} and \ref{tab:results-synthetic}, the label `Incorrect(<1)' refers to all edges between datasets with a containment fraction less than 1. This is because in our use case, an edge is correct only if the child node is completely contained in the parent.

It is important to note that all the correct edges are captured from the schema step itself, and we focus on reducing the total number of incorrect edges in each step.

\subsection{Comparison with Baselines}
In order to evaluate our method in comparison to other approaches, we compare with several baselines.
To the best of our knowledge, there is no method in the literature that directly computes table level containment, thus we have modified several related approaches to serve as our baselines. We next discuss these approaches for both schema containment as well as content containment.

\begin{table}[htbp]
\resizebox{0.75\linewidth}{!}{%
\begin{tabular}{|c|c|c|c|c|c|}
\hline
\multirow{2}{*}{\textbf{Data}} & \textbf{Size} & \textbf{Number} & \textbf{Graph after} & \textbf{Graph after} & \textbf{Graph after} \\
 & \textbf{(TB)} & \textbf{of edges} & \textbf{SGB} & \textbf{MMP} & \textbf{CLP} \\ \hline
\multirow{4}{*}{Customer 1} & \multirow{4}{*}{0.681} & Correct & 278 & 278 & 278 \\ \cline{3-6} 
 &  & Incorrect (\textless 1) & 6657 & 3414 & 110 \\ \cline{3-6} 
 &  & Not detected & 0 & 0 & 0 \\ \hline\hline
\multirow{4}{*}{Customer 2} & \multirow{4}{*}{41.8} & Correct & 31 & 31 & 31 \\ \cline{3-6} 
 &  & Incorrect (\textless 1) & 1192 & 600 & 315 \\ \cline{3-6} 
 &  & Not detected & 0 & 0 & 0 \\ \hline\hline
 \multirow{4}{*}{Customer 3} & \multirow{4}{*}{27.6} & Correct & 21 & 21 & 21 \\ \cline{3-6} 
 &  & Incorrect (\textless 1) & 1769 & 421 & 272 \\ \cline{3-6} 
 &  & Not detected & 0 & 0 & 0 \\ \hline
\end{tabular}}
\caption{Enterprise data results: Number of correct, incorrect, and undetected edges after each step of our algorithm, with respect to the ground truth containment graph. We are able to preserve the correct edges while significantly reducing the number of incorrectly detected edges in each step of the R2D2 pipeline.}
\label{tab:results-enterprise}
\end{table}

\subsubsection{Schema Containment}
We considered the following baselines. 

\begin{enumerate}

\item Ground Truth Schema - A brute force baseline is to compare the schema sets for all pairs of datasets and check whether containment holds.

\item Bharadwaj et. al \cite{bharadwaj} - It considers tables that occur in the same join query to be "joinable". While we have the access logs at a dataset level, we did not have access to actual query data for the corresponding customer orgs. Moreover, our use case requires looking at containment, not joinability. The ground truth schema containment graph gives us examples of pairs of tables that satisfy the schema containment condition (we call them positive samples). For negative samples, we randomly sample two tables that are not present in the ground truth graph. For every pair of tables, we build the feature vector using column name similarity and column name uniqueness as done in the original paper. Further, we train multiple classifiers on this set of positive and negative samples with the task of predicting whether containment exists. Accuracy numbers are reported in Table \ref{tab:bharadwaj-comparison}. 
    
\item KMeans Clustering - Our method, SGB computes clusters algorithmically. We compare this with a baseline of generating clusters with KMeans/KMedoids. We get embedding vectors for each table schema by computing the average of the column embedding vectors for that table. We then employ KMeans clustering to create schema clusters based on these embedding vectors. Pairwise schema containment is computed for members within each cluster similar to SGB. Evaluation is done with respect to the ground truth schema containment graph.
\end{enumerate}
We compare the generated table pairs of our method as well as the baselines with the ground truth schema containment graph in Table \ref{tab:bharadwaj-comparison}. It can be clearly observed from the above results that both baselines (which are embedding based) perform worse than our deterministic SGB algorithm. 

\begin{table}[htbp]
\resizebox{0.75\linewidth}{!}{%
\begin{tabular}{|c|c|c|c|c|c|}
\hline
\multirow{2}{*}{\textbf{Data}} & \multirow{2}{*}{\textbf{Size}} & \textbf{Number} & \textbf{Graph after} & \textbf{Graph after} & \textbf{Graph after} \\
 & & \textbf{of edges} & \textbf{SGB} & \textbf{MMP} & \textbf{CLP} \\ \hline
\multirow{3}{*}{Table Union} & \multirow{3}{*}{324 MB} & Correct & 1863 & 1863 & 1863 \\ \cline{3-6} 
 &  & Incorrect (<1) & 2902 & 707 & 115 \\ \cline{3-6} 
 &  & Not detected & 0 & 0 & 0 \\ \hline\hline
\multirow{3}{*}{Kaggle} & \multirow{3}{*}{24 GB} & Correct & 1093 & 1093 & 1093 \\ \cline{3-6} 
 &  & Incorrect ( <1) & 1663 & 476 & 58 \\ \cline{3-6}  
 &  & Not detected & 0 & 0 & 0 \\ \hline
\end{tabular}}
\caption{Synthetic data results: Number of correct, incorrect, and undetected edges after each step of our algorithm, with respect to the ground truth containment graph. We are able to preserve the correct edges while significantly reducing the number of incorrectly detected edges in each step of the R2D2 pipeline.}
\label{tab:results-synthetic}
\end{table}

\begin{table*}[htbp]
\resizebox{1\linewidth}{!}{%
\begin{tabular}{|c|c|cccc|}
\hline
\multirow{3}{*}{\textbf{Method}} & \multirow{3}{*}{\textbf{Complexity}} & \multicolumn{4}{c|}{\textbf{Pairwise Operations}} \\
& & \textbf{Customer 2} & \textbf{Customer 1} & \textbf{Kaggle} & \textbf{Table Union} \\
& & 41.8TB & 0.68TB & 24 GB & 324 MB\\ \hline
Ground Truth Schema & ${N\choose 2}$ & $1.34\times 10^9$ & $1.47\times 10^5$ & $9\times 10^3$ &  $4.33\times 10^4$  \\ \hline
\textbf{SGB} ($E_1$ edges) & $N\log{N}+K(N-K)+\sum_{i}^{K} {K_i\choose 2}$ & $2.13\times 10^5$ & $1.32\times 10^5$ & $2.05\times 10^4$ &  $9.1\times 10^4$ \\ \hline\hline
Ground Truth Content & $\sum_{i,j}^{E_1} M_iM_j$ & $7.36\times 10^{21}$ & $7.4\times10^{21}$ & $5.55\times 10^{15}$ & $9.77\times10^{11}$\\ \hline
\textbf{MMP} ($E_2$ edges) & $E_1$ & 1192 & 6657 & 5512 & 9530\\ \hline
\textbf{CLP} & $\sum_i^{E_2} M_it$ & $1.06\times 10^{10}$ & $2.5\times10^{12}$ & $3.39\times10^{10}$ & $4.94\times10^{8}$\\ \hline
\end{tabular}}
\caption{Comparison of the number of pairwise row-level operations that need to be conducted to compute schema and table-level containment. $N$ = number of tables, $M_i$ = number of rows in table $i$, $K_i$ = number of tables in cluster $i$, $t$ = max rows to sample in CLP, $E_i$ = number of edges in the resulting graph. We compare each step of R2D2's pipeline with standard brute force approaches to compute containment. Comparison is done on the same synthetic and enterprise datasets in Tables 1 and 2 at various scales (MBs - TBs).}
\label{tab:row-ops}
\end{table*}

\begin{table}[htbp]
\resizebox{0.45\linewidth}{!}{%
\begin{tabular}{|c|c|c|c|}
\hline
\multirow{2}{*}{\textbf{Data}} &  \multirow{2}{*}{\textbf{Method}} & \textbf{Correctly} & \textbf{Not} \\
 &   & \textbf{Identified} & \textbf{Detected} \\ \hline
\multirow{3}{*}{Customer 1} & \cite{bharadwaj} & 10774  & 363  \\ \cline{2-4}
& KMeans & 8781  & 2356  \\ \cline{2-4} 
& SGB & 11137  & 0  \\
 \hline\hline
\multirow{3}{*}{Customer 2} & \cite{bharadwaj} & 2009 & 61 \\ \cline{2-4}
& KMeans & 1455  & 615  \\ \cline{2-4}
& SGB & 2070  & 0  \\ \hline
\end{tabular}}
\caption{Comparing \cite{bharadwaj}, KMeans, and our approach (SGB) on enterprise data for schema containment. The number of correctly detected and undetected edges are reported with respect to edges in the ground truth schema graph.}
\label{tab:bharadwaj-comparison}
\end{table}

\subsubsection{Full Table Containment}
Next, for full table level containment (assuming the schema step is done), there are a few baselines that solve the problem of set containment and not table containment. Though the baselines cannot be directly applied to our problem setting, we explain below how we modified the baselines and why they fail to produce the desired results in the current scenario.

\textbf{Ground Truth} - This is a standard brute-force approach, where we take each edge (pair of tables) from the schema graph, and subsequently compare hashes of all possible row pairs to check the extent of containment for each edge. This does not scale.

\textbf{LCJoin} \cite{lcjoin} - It solves the problem of finding the subset relationship between sets from two collections of sets. We can apply this in two ways to our setting. In our setting, we can treat columns as sets and then find the subset relationship between all the "column sets". This gives inaccurate results since we've already discussed before that \textit{column containment does not translate to table containment}. Another option is to treat every table as a set, where an element of the set is represented by a row (tuple of values). This also gives inaccurate results since one table might be fully contained in another table with a larger number of columns but LCJoin won't be able to find it since the elements of the sets in both tables will be of different sizes.

\textbf{JOSIE} \cite{josie} - It deals with the problem of finding top-k related sets for a query set. We cannot apply JOSIE directly to our problem setting because of similar reasons as LCJoin. JOSIE, additionally, is concerned with table relatedness and not containment.

\subsection{Evaluating Scalability}

In comparison with ground truth graph generation, R2D2 is several orders of magnitude faster on both enterprise and synthetic data as indicated by the numbers in Table \ref{tab:time-synthetic}. 
While running the pipeline on synthetic data, we cache the minimum and maximum values of all the columns, since these are generally present as part of table/partition level metadata. The comparable times between the Kaggle and enterprise data are due to the differences in parallelism, optimization (present in Spark) and the compute clusters used to run the models.

\begin{table}[htbp]
\resizebox{0.70\linewidth}{!}{%
\begin{tabular}{|c|cc|cc|}
\hline
\multirow{3}{*}{\textbf{Method}} & \multicolumn{4}{c|}{\textbf{Time Taken}} \\ \cline{2-5} 
 & \textbf{Customer 1} & \textbf{Customer 2} & \textbf{Table Union} & \textbf{Kaggle} \\
 & \textbf{0.681 TB} & \textbf{41.8 TB} & \textbf{324 MB} & \textbf{24 GB} \\ \hline
\textbf{Ground Truth} & \multicolumn{1}{c|}{$\sim$days} & $\sim$days & \multicolumn{1}{c|}{3.37 hrs} & $\sim$days \\ \hline
\textbf{SGB} & \multicolumn{1}{c|}{0.51s} & 0.8s & \multicolumn{1}{c|}{0.03 s} & 0.0114 s \\ \hline
\textbf{MMP} & \multicolumn{1}{c|}{7.07 mins} & 8.25 mins & \multicolumn{1}{c|}{9 s} & 3.45 mins \\ \hline
\textbf{CLP} & \multicolumn{1}{c|}{5.12 hrs} & 4.75 hrs & \multicolumn{1}{c|}{5.4 mins} & 5.15 hrs \\ \hline
\textbf{Ours (Total)} & \multicolumn{1}{c|}{$\sim$5.24 hrs} & $\sim$4.89 hrs & \multicolumn{1}{c|}{5.5 mins} & 5.37 hrs \\ \hline
\end{tabular}
}
\caption{Comparison of the time taken for various steps of our optimized pipeline against the ground truth containment computation on enterprise and synthetic data.}
\label{tab:time-synthetic}
\end{table}

In Table \ref{tab:row-ops} we compare the number of row level operations that need to be incurred for computing the ground truth for each step of our pipeline, for various scales of data (MBs - TBs). The smaller size datasets (MBs - GBs) are synthetic, whereas the larger datasets (in TBs) correspond to enterprise data. Note that for the baseline ground truth schema computation, we only need to compare one "row" (that is the flattened column names) from each table.
For the brute force ground truth containment, each row of dataset $A$ needs to be hashed and compared with each row of dataset $B$ corresponding to each edge of the schema graph. This is done for each edge in the schema graph. However, in our pipeline, we prune the edges in an optimized manner, which reduces the required number of operations exponentially.
In addition to the above table, we show a variation of the time taken by our pipeline as the size of the data is varied from 240MB to 41.8TB for Customer 2 and 207MB to 681GB for Customer 1 in Fig. \ref{fig:scalability}. The difference in time between the 2 customers is influenced by the number of edges remaining after the SGB and MMP steps (which is higher for Customer 1). At higher scales, the time is dominated by the CLP step since there are a larger number of rows to scan. However, at lower scales, the MMP step also comes into the picture since the time taken by CLP and MMP become comparable.

\begin{figure} 
    \centering
    \includegraphics[width=0.5\linewidth]{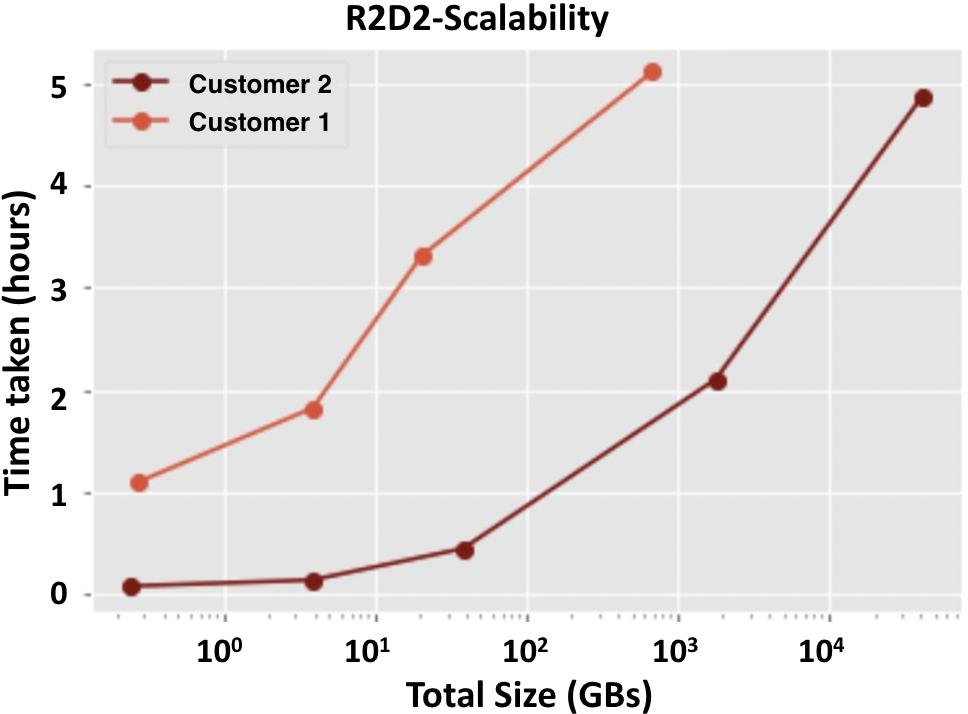}
    \caption{Variation of the time taken by our pipeline (in hrs) vs total size of the data (in GBs) for Customer 1 and 2. For each datapoint on the curve, we ran the pipeline on datasets below a certain size threshold (ranging from 10MB to 10TB).}
    \label{fig:scalability}
\end{figure}

\subsection{Content Level Pruning: Parameter Selection}
The CLP step of our pipeline requires selecting how many rows to sample in addition to the number of columns to sample from. Here we present an analysis of the number of edges reduced in each case while identifying containment for a particular dataset. Empirically, we observed that increasing the number of columns sampled (denoted by $s$) does not help beyond a point (around $s=4$). Similarly, increasing the number of rows sampled (denoted by $t$) results in diminishing returns beyond a point, considering that sampling additional rows and columns takes up extra time. We found that using $s=4, t=10$ is a fairly reasonable parameter configuration to balance speed and accuracy.

Note that in Apache Spark, sampling a table naively results in a full table scan. Even this is significantly faster than comparing rows pairwise, however in such situations, the time taken in the CLP step for different parameter configurations will be comparable. This is because the whole table will be scanned regardless of how the samples are filtered. However, for indexed databases, or situations where the data is partitioned by a particular column like timestamp, the query can run in an optimized manner and only access a subset of rows, which results in significantly lower time taken.

\subsection{Evaluating Optimization Framework}
\label{sec:optim-results}
So far we have looked into the performance of the containment graph generation portion of our pipeline. In this section, we evaluate the optimization framework that takes this graph as input and recommends datasets for deletion based on dataset containment, access patterns, and cloud storage cost parameters.

We used Azure Data Lake Gen2 public hot tier storage and read costs\footnote{\label{footnoteazure}https://azure.microsoft.com/en-in/pricing/details/storage/data-lake/} for the variables $C_s$ and $C_m$ in the integer linear program (ILP) in Eq. \ref{eq:opt}. The expected number of accesses $A$ and maintenance frequency $f_m$ were chosen based on real access patterns for the enterprise datasets. For synthetic data, we sampled $A$ and $f_m$ from a power law distribution. We took the dataset reconstruction cost for an edge $u\rightarrow v$ to be the write cost for writing dataset $v$. Again, the write cost per unit size was taken from the Azure public write costs. 

\begin{table}[htbp]
\resizebox{0.25\linewidth}{!}{%
\begin{tabular}{|c|c|c|c|}
\hline
\textbf{s/t} & \textbf{5} & \textbf{10} & \textbf{30} \\ \hline
\textbf{1} & 908 & 824 & 712 \\ \hline
\textbf{4} & 141 & 122 & 110 \\ \hline
\textbf{8} & 135 & 121 & 109 \\ \hline
\end{tabular}
}
\caption{Comparison of the number of incorrect edges remaining for different parameter configurations in CLP for our 42TB enterprise dataset. Here $s$ (1,4,8) denotes the number of columns that are sampled and $t$ (5,10,30) denotes the number of rows that are sampled from the smaller table.}
\label{tab:clp-params}
\end{table}

We ran two types of experiments to check (i) the number of nodes and edges being deleted and retained, which corresponds to cost reduction; (ii) the optimization scalability and robustness as the number of nodes and edges in the containment graph increase. 

\begin{table}[htbp]
\resizebox{0.75\linewidth}{!}{%
\begin{tabular}{|c|cc|cc|c|}
\hline
\multirow{2}{*}{\textbf{Data}} & \multicolumn{2}{c|}{\textbf{Deletion}} & \multicolumn{2}{c|}{\textbf{Retention}} & \textbf{GDPR Savings} \\ \cline{2-6} 
 & \multicolumn{1}{c|}{\textbf{Nodes}} & \textbf{Edges} & \multicolumn{1}{c|}{\textbf{Nodes}} & \textbf{Edges} & \textbf{(per month, row scans)} \\ \hline
\textbf{Customer 1} & \multicolumn{1}{c|}{19} & 259 & \multicolumn{1}{c|}{99} & 19 & 5.3M \\ \hline
\textbf{Customer 2} & \multicolumn{1}{c|}{12} & 19 & \multicolumn{1}{c|}{18} & 12 & 0.2M \\ \hline
\end{tabular}}
\caption{Optimization results on enterprise data. Given a customer, this illustrates the fraction of datasets we delete and retain in our data lake, along with the extent of compute savings (assuming 1 privacy-initiated access per week).}
\label{tab:optim-enterprise}
\end{table}

The optimization routine accounts for an expected number of accesses, and only recommends deletion if the overall cost benefit is positive. If a dataset is accessed after being deleted, we would need to pay the reconstruction cost involved. This would reduce the overall cost benefit. In Table \ref{tab:optim-enterprise}, there were no extra accesses for the datasets we recommended for deletion. 

We have quantified the overall storage and compute benefits our approach would bring after accounting for such ``accesses after deletion" for a data lake of size 10 PB across a horizon of one year in Fig. \ref{fig:storage-compute}. Savings are computed in terms of storage cost as well as maintenance cost. We consider two cases - (i) 1 privacy-initiated read access per week, or (ii) 5 privacy-initiated accesses per week. We subtract the required read and write costs in case reconstruction is necessary. We used the same public Azure storage, read, and write costs as mentioned earlier.

\begin{figure}[htbp]
    \centering
    \includegraphics[width=0.5\linewidth]{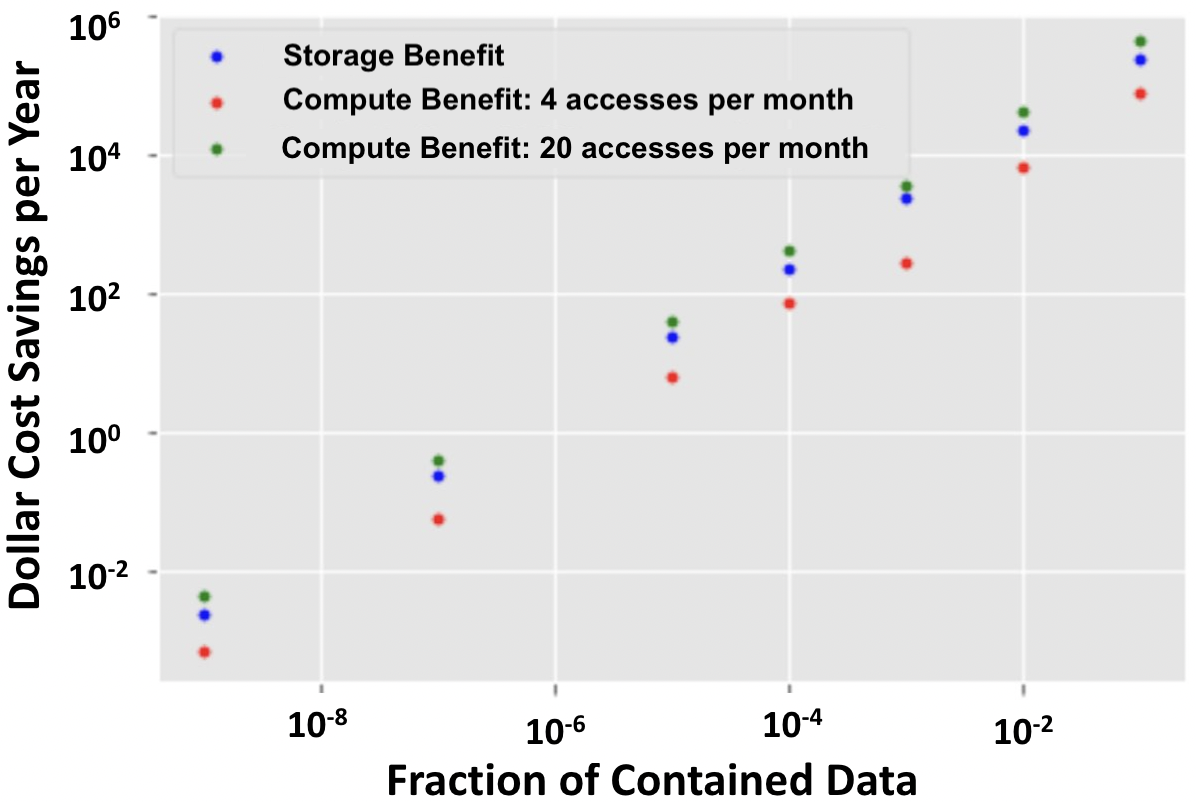}
    \caption{Storage and Compute Cost Benefits for a 10 PB data lake varying with the fraction of contained data across a 1 year horizon. This accounts for any reconstruction write and read cost due to access after deletion.}
    \label{fig:storage-compute}
\end{figure}

We also checked the scalability of the optimization routine by generating random graphs of various sparsity using the Erdos-Renyi model\footnote{Refer \url{https://networkx.org/documentation/stable/reference/generated/networkx.generators.random_graphs.erdos_renyi_graph.html.}} \cite{erdos} (Figure \ref{fig:optimization-scale}). In the first graph, we plotted how the time taken to solve the optimization problem varies as the number of nodes (datasets) increase (while keeping the probability `p' in the Erdos-Renyi model fixed). In the second graph, we plotted how the time taken to solve the optimization problem varies as the number of containment relationships or edges increase (this is varied by changing the probability `p' in the Erdos-Renyi model) while keeping the number of nodes (datasets) fixed.

\begin{figure}[htbp]
  \centering
  \begin{minipage}{0.48\linewidth}
    \centering
    \includegraphics[width=\linewidth]{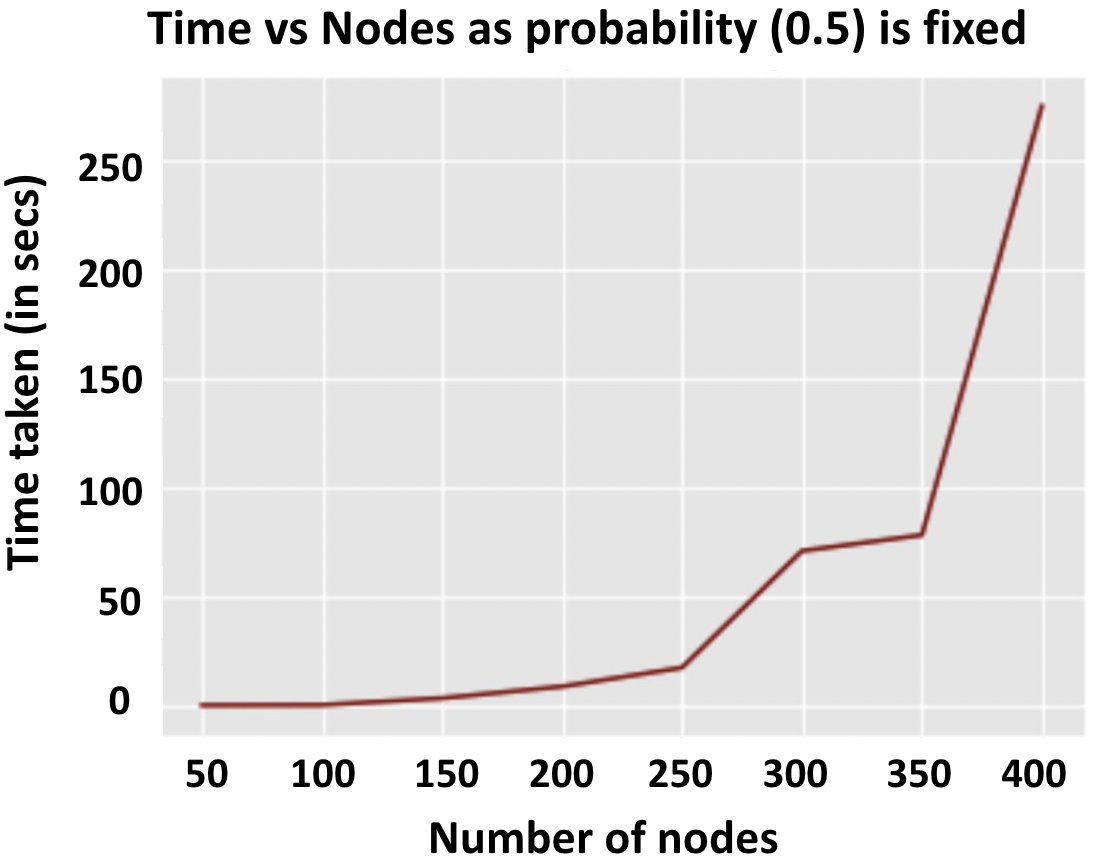}
  \end{minipage}
  \hfill
  \begin{minipage}{0.48\linewidth}
    \centering
    \includegraphics[width=\linewidth]{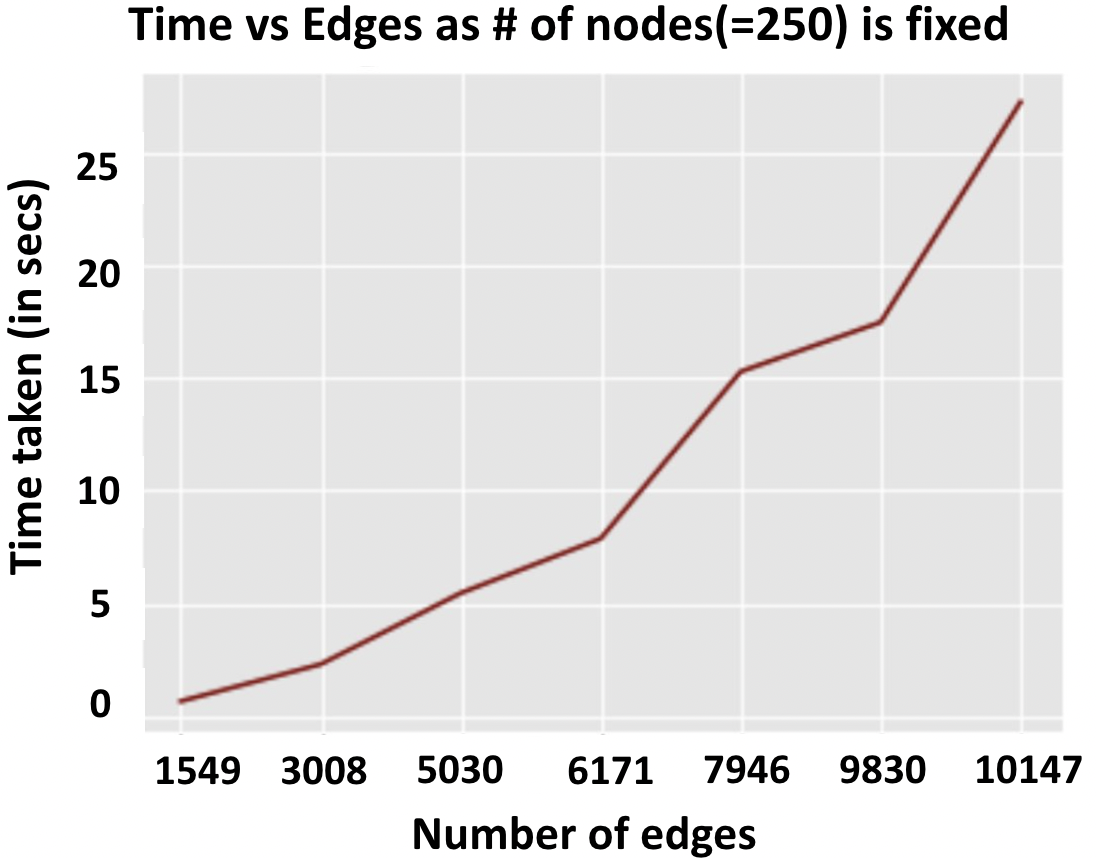}
  \end{minipage}
  \caption{Time taken by the optimization framework as the number of (i) nodes increase and (ii) edges increase.}
  \label{fig:optimization-scale}
\end{figure}

\section{Discussions}
\subsection{Dynamic Graph Updates}
\label{sec:dynamic-updates}
In some enterprise data lakes, it is not possible to directly edit the data. In such cases, only data deletion needs to be considered. However, in others, data can be updated. In such cases, our containment graph needs to be able to be updated as well. One way to do this is by simply running our pipeline again. We propose running the system at a monthly frequency on all the candidate datasets. The time taken to run our system on enterprise data is fairly short even in Apache Spark (a few hours), so this is feasible. If the use case simply requires updating the containment graph on the fly, we can do that efficiently. We consider the following dynamic update cases. 

\textbf{Adding new datasets: }
If a new dataset $v$ is added, we can simply check containment between that dataset and all of the others by applying SGB, MMP, and CLP successively. This can be done as follows. First, we check whether $v$ is contained in any of the cluster centers in the SGB schema graph. If yes, we add it to the respective clusters as a member. If not, it is a new cluster center and cluster membership has to be checked for all the datasets, which is linear in the total number of datasets in the graph. We add edges between $v$ and the cluster members, based on containment evaluation. Then, we successively prune the edges incident on $v$ by applying MMP and CLP. The complexity will be \textbf{linear} in the total number of datasets in the graph, which is fast. 

\textbf{Rows or columns added to existing datasets: }
For rows or columns added to a dataset, in the containment graph, all outgoing edges from that dataset will remain. However, any incoming edges as well as relationships with datasets that did not have an edge previously need to be checked. Once again, this is \textbf{linear} in the number of datasets in the graph. 

\textbf{Rows or columns removed from a dataset: }
If rows or columns are removed from a dataset, the incoming edges incident on the dataset in the containment graph remain. However, for outgoing edges, we need to re-check containment. This is also linear in the number of datasets. 

\textbf{Deleting existing datasets: }
While this can be handled easily in the final containment graph by simply removing the corresponding nodes and all the incident edges to and from the deleted nodes, we need to be careful here because it has implications for the optimization routine as well, which is true for the most of the above cases as well. 
Hence, if the use case only requires an updated containment graph, the above steps can be implemented very efficiently and dynamically, but for the full pipeline, including the optimization routine for recommending dataset deletion, it is better to run R2D2 periodically on the entire data lake for accurate optimization. 

\subsection{Approximate Dataset Relatedness}
\label{sec:approx-containment}
In this paper, we have focused on identifying pairs of tables where $CM(P, Q) = 1$, where $n(Q) \geq n(P)$, as defined in Sec. \ref{sec:problem-statement}. A related problem is to identify cases where $CM(P, Q) > T$, where $T < 1$. This can be applied at a schema level as well as at a content level. In this situation, the rows and schema of P may not be fully contained within those of Q, since P may have some additional rows and columns, though the majority of the schema or content may be common to both tables.
Approximate Containment is a non-trivial problem and is out of scope for this paper. We next discuss some aspects of approximate containment and the associated challenges.

\subsubsection{Approximate Schema Containment}
It is possible that the schema of two tables are slightly different while their contents are very similar or completely contained within one another. We would treat such tables to have no containment, since we focus on finding both schema and content level containment in an exact manner. However, a notion of approximate containment may be useful - for example, "phone number" may be represented as "Phone", "Mobile", "Work Phone", ... , etc. Spelling errors and abbreviations are also possible. It would be useful to have a method to identify whether two tokens represent the same type of content just by looking at the schema. If a canonical list of possible schema tokens exists beforehand, we can identify which tokens convey the same meaning and map them to the same value by maintaining such a look-up or through human input. Such cases of containment can be handled by SGB by mapping the schema tokens to canonical values. However, computing this mapping automatically is nontrivial, since different datasets can have different meanings and quirks based on how the data has been processed. For instance, "Work phone" and "home phone" are both phone numbers, but we cannot treat these as the same. Often, enterprise schemas contain columns that may look similar but actually represent very different things, e.g. there may be columns "company.product.var0" and "company.product.var1". Understanding the meaning of such columns is limited by our understanding of domain knowledge. Embedding based approaches to cluster schema tokens can easily treat such columns as the same, which would give inaccurate results. Understanding approximate containment between enterprise schemas without knowledge of what possible tokens could exist beforehand is a challenging problem that we leave for future work.

\subsubsection{Approximate Content Containment}
Solving the approximate schema containment problem is a necessary but insufficient condition to solve the approximate content level containment problem. Thus, any mistakes made in the approximate schema step may propagate later on in the pipeline when we start analyzing table content. Additionally, using min-max pruning (MMP) will not work here, since the relative minimum or maximum of two columns are not indicative of their extent of approximate containment. The content-level pruning (CLP) step of our pipeline can be used to query rows from one or both tables and claim approximate containment with a certain degree of confidence. However, care needs to be taken to ensure we do not (i) incur a large number of row level operations and/or (ii) miss cases of approximate containment between tables.

\section{Conclusion}
We present R2D2, a framework to reduce data redundancy and duplication in large enterprise data lakes. It is scalable and substantially better than baselines. R2D2 performs well across different types and scales of data, and can be applied practically in an enterprise setting. Going forward, we want to quantify approximate data relatedness in addition to exact containment, and detect certain types of transformations between tables without human input. 

\bibliographystyle{ACM-Reference-Format}
\bibliography{cameraready}


\end{document}